\newcommand{\mathsym}[1]{{}}
\newcommand{\unicode}[1]{{}}
\numberwithin{equation}{section}
\theoremstyle{plain}
\newtheorem{theorem}{Theorem}
\newtheorem{corollary}[theorem]{Corollary}
\newtheorem{proposition}[theorem]{Proposition}
\numberwithin{theorem}{section}
\theoremstyle{definition}
\theoremstyle{remark}
\newtheorem{remark}[theorem]{Remark}
\newcommand{\+}{\!+\!}
\renewcommand{\leq}{\leqslant}
\renewcommand{\geq}{\geqslant}
\newcommand{\hypergeometric}[6][\bigg]{\,{}_{#2} F_{#3} #1
( \begin{matrix} #4 \\ #5 \end{matrix}\, #1\vert\, #6 #1)}
\begin{document}
\title[Jacobi and Cauchy ensemble moments]{Relations between moments for the Jacobi and Cauchy random matrix ensembles}
\author{Peter J. Forrester}
\address{School of Mathematics and Statistics, 
ARC Centre of Excellence for Mathematical
 and Statistical Frontiers,
University of Melbourne, Victoria 3010, Australia}
\email{pjforr@unimelb.edu.au}

\author{Anas A. Rahman}
\address{School of Mathematics and Statistics, 
ARC Centre of Excellence for Mathematical
 and Statistical Frontiers,
University of Melbourne, Victoria 3010, Australia}
\email{anas.rahman@live.com.au}

\begin{abstract}
We outline a relation between the densities for the $\beta$-ensembles with respect to the Jacobi weight $(1-x)^a(1+x)^b$ supported on the interval $(-1,1)$ and the Cauchy weight $(1-\mathrm{i}x)^{\eta}(1+\mathrm{i}x)^{\bar{\eta}}$ by appropriate analytic continuation. This has the consequence of implying that the latter density satisfies a linear differential equation of degree three for $\beta=2$, and of degree five for $\beta=1$ and $4$, analogues of which are already known for the Jacobi weight $x^a(1-x)^b$ supported on $(0,1)$. We concentrate on the case $a=b$ (Jacobi weight on $(-1,1)$) and $\eta$ real (Cauchy weight) since the density is then an even function and the differential equations simplify. From the differential equations, recurrences can be obtained for the moments of the Jacobi weight supported on $(-1,1)$ and/or the moments of the Cauchy weight. Particular attention is paid to the case $\beta=2$ and the Jacobi weight on $(-1,1)$ in the symmetric case $a=b$, which in keeping with a recent result obtained by Assiotis et al.~for the $\beta=2$ case of the symmetric Cauchy weight (parameter $\eta$ real), allows for an explicit solution of the recurrence in terms of particular continuous Hahn polynomials. Also for the symmetric Cauchy weight with $\eta=-\beta(N-1)/2-1-\alpha$, after appropriately scaling $\alpha$ proportional to $N$, we use differential equations to compute terms in the $1/N^2$ ($1/N$) expansion of the resolvent for $\beta=2$ ($\beta=1,4$).
\end{abstract}
\maketitle

\section{Introduction}\label{s1}
The study of moments in random matrix theory was initiated by Wigner in 1955 \cite{Wi55}. This was in the context of analysing the limiting
spectral density of a particular class of real symmetric matrices $J_N$ of size $(2N+1)\times (2N+1)$, with diagonal entries set to zero, and the
strictly upper triangular entries independently chosen from the values $\pm w$ with equal probability. Upon scaling the eigenvalues by
$1/\sqrt{N}$, or equivalently multiplying $J_N$ by this value to obtain the scaled matrices $\bar{J}_N$, Wigner considered the trace of
even powers of $\bar{J}_N$, and was able to prove that for $p \in \mathbb Z_{\ge 0}$,
\begin{equation}\label{1.1}
\lim_{N \to \infty} {1 \over N}  \Big \langle {\rm Tr} \, \bar{J}_N^{2p} \Big \rangle  = w^p {1 \over p + 1} \binom{2p}{p}.
\end{equation}
The combinatorial expression on the RHS is recognised as the $p$-th Catalan number. 

The average of the LHS of (\ref{1.1}) is an example of a linear statistic of the eigenvalues $\{\lambda_j\}_{j=1}^N$.
Indeed, $  {\rm Tr} \, \bar{J}_N^{2p}  = \sum_{j=1}^N \lambda_j^{2p}$ is a sum over individual eigenvalues, which is
the defining feature of a linear statistic. Introduce the scaled eigenvalue density $\bar{\rho}^{(N)}(\lambda)$  corresponding to the
 matrices $\bar{J}_N$, which by definition has the property that $N\int_a^b \bar{\rho}^{(N)}(\lambda) \, d\lambda$ is equal to the 
expected number of eigenvalues in the interval $[a,b]$. The fact that  $  {\rm Tr} \, \bar{J}_N^{2p} $ is a linear statistic tells us
that its average can be expressed as an integral over the scaled eigenvalue density,
\begin{equation}\label{1.2}
 {1 \over N}  \Big \langle {\rm Tr} \, \bar{J}_N^{2p} \Big \rangle  =  \int_{-\infty}^\infty \lambda^{2p} \bar{\rho}^{(N)}(\lambda)\, d\lambda.
 \end{equation}
 Consider now the particular density function $\rho^{\rm W}(\lambda)$ supported on $[-2w,2w]$ specified by the functional form
\begin{equation}\label{1.3} 
\rho^{\rm W}(\lambda)=\frac{1}{2\pi w^2}(4w^2-\lambda^2)^{1/2}.
 \end{equation}
 This is referred to as the Wigner semi-circle. A simple change of variables and use of  knowledge of the Euler
 beta function in terms of gamma functions verifies that
 \begin{equation}\label{1.4} 
 \int_{-2w}^{2w} \lambda^{2p} \rho^{\rm W}(\lambda) \, d \lambda =  w^p {1 \over p + 1} \binom{2p}{p}.
 \end{equation}
 Thus upon comparing to (\ref{1.1}) and recalling (\ref{1.2}), we have
  \begin{equation}\label{1.5} 
\lim_{N \to \infty}    \int_{-\infty}^\infty \lambda^{2p} \bar{\rho}^{(N)}(\lambda)\, d\lambda =  \int_{-2w}^{2w} \lambda^{2p} \rho^{\rm W}(\lambda) \, d \lambda.
 \end{equation}
 
 In relation to (\ref{1.4}), and using too the fact that the odd moments vanish, the exponential generating function
 can be formed to obtain
  \begin{equation}\label{1.6x} 
   \int_{-2w}^{2w} e^{ i s \lambda} \rho^{\rm W}(\lambda) \, d \lambda =   {J_1(2w s) \over ws},
 \end{equation}
where $J_\nu$ denotes the Bessel function of order $\nu$. This is analytic at the origin, 
telling us that $ \rho^{\rm W}(\lambda) $ is uniquely defined by its moments. This fact is sufficient
for the limit formula (\ref{1.5}) to imply that the sequence of measures $ \{\bar{\rho}^{(N)}(\lambda)\, d\lambda \}_{N=1}^\infty$ 
converges weakly to $ \rho^{\rm W}(\lambda) \, d \lambda$, i.e.~to the Wigner semi-circle. This was the conclusion of Wigner's
analysis. 
Wigner's second paper on this topic, published in 1958 \cite{Wi58}, 
considered a wider class of random real symmetric matrices. Independence of entries, fixed variance and bounded moments were shown, by the same technique, to be sufficient conditions
for the limiting spectral density to be given by the Wigner semi-circle.

Knowledge of  the limiting form of the spectral density and moments is refined by knowledge of finite $N$
corrections. For the class of Wigner matrices, i.e.~real symmetric or complex Hermitian random matrices
 with entries on the diagonal chosen independently from a particular zero mean distribution, and
 upper triangular entries chosen independently from a finite mean, finite variance distribution, it is 
 known that averages of suitable test functions $\phi(\lambda)$ over $ \bar{\rho}^{(N)}(\lambda)$ permit $1/N$ expansions
   \begin{equation}\label{1.6} 
 \int_{-\infty}^\infty \phi(\lambda)\bar{\rho}^{(N)}(\lambda)\, d\lambda 
	= \int_{-2w}^{2w} \phi(\lambda)\rho^{\rm W,0}(\lambda) \, d\lambda 
	+ \frac{1}{N} \int_{-\infty}^\infty \phi(\lambda)\rho^{\rm W,1}(\lambda) \, d\lambda
	+ \cdots
\end{equation}
where $\rho^{\rm W,0}(\lambda) = \rho^{\rm W}(\lambda)$ is the Wigner semi-circle (\ref{1.3}) and $\rho^{\rm W,1}(\lambda)$
depends on the second moment of the diagonal entries, and moments up to and including the fourth
of the off-diagonal elements \cite{BY05, KKP96}; see the recent work \cite{FT19c} for a convenient
summary.

The explicit form of the expansion (\ref{1.6}) is known to higher orders in the case of real symmetric Wigner matrices
with independent normals ${\rm N}[0,\sqrt{2} w]$ on the diagonal, and ${\rm N}[0,w]$ on the off-diagonal (GOE matrices),
or complex Hermitian Wigner matrices with  independent normals ${\rm N}[0,w]$ on the diagonal, and
${\rm N}[0, w/\sqrt{2}] + i{\rm N}[0, w/\sqrt{2}] $ on the off-diagonal (GUE matrices); see \cite{WF14}.  In general, 
for a class of scaled Wigner matrices $\{X \}$ with symmetric distributions so that the odd moments vanish,
the $2p$-th power of the trace is a  polynomial in $1/N$ of degree $p$. For the GOE and GUE this polynomial ---
the constant term of which is given by the RHS of (\ref{1.1}) --- has a number of unique features among the
class of Wigner matrices which can be traced back to the fact that the spectral density in these cases
can be evaluated in terms of Hermite polynomials; see \cite{MS11} and \cite{WF14} as well as the earlier
works \cite{HT03} and \cite{Le04}. One of these features is that the moments can be characterised
by a recurrence. The simplest case is the scaled GUE (we take $w = 1/2$ for definiteness), for which a result first obtained by Ullah
\cite{Ul85} tells us that
  \begin{equation}\label{1.7}
\int_{-\infty}^\infty e^{i k \lambda} \bar{\rho}^{(N)}(\lambda) \, d \lambda =
   {1 \over N} e^{-k^2/8N} L_{N-1}^{(1)}(k^2/4N),
   \end{equation} 
   where $L^{(p)}_n(x)$ denotes the Laguerre polynomial. Known asymptotics
   of the Laguerre polynomials shows that (\ref{1.7}) limits to (\ref{1.6x}), as it must.
   As noted in the recent work \cite{Fo20}, this explicit formula can be used
   to show that the LHS satisfies a second order differential equation
   \cite{HT03,Le04}, and that $ \bar{\rho}^{(N)}(\lambda)$ satisfies a third order
   differential equation \cite{LM79,GT05}. From the former a second recurrence satisfied
   by the sequence of $2p$-th moments can be deduced \cite{HZ86,HT03}.
   Analogous results can be obtained for the GOE but with a higher level of complexity.
   The LHS of (\ref{1.7}) now satisfies a fourth order linear differential equation, and the
   sequence of $2p$-th moments satisfies a fourth order difference equation
   \cite{Le09}. The density satisfies a fifth order differential equation
   \cite{WF14}.
   
   The Hermite polynomials expressing the spectral density for the GOE and GUE
   (and too the GSE --- see e.g.~\cite[Ch.~6]{Fo10}) correspond to the Gaussian
   weight in the theory of classical orthogonal polynomials. We recall that a weight function
   is said to be classical if its logarithmic derivative ${d \over dx} \log w(x)$ can be written
   in the form $-g(x)/ f(x)$ with $f>0$, $f$ and $g$ having no common factors,
   and the degree of $f$ ($g$) less than or equal to $2$ $(1)$; see e.g.~\cite[\S 5.4.3]{Fo10}.
   Up to fractional linear transformations, it is known that the only classical weights
   with support on the real line are the Gaussian, the Laguerre weight
   $x^a e^{-x} \chi_{x > 0}$ supported on $x>0$, as well as the Jacobi and Cauchy weights 
    \begin{equation}\label{1.8} 
    (1 - x)^a (1 + x)^b \chi_{-1 < x < 1}, \qquad (1 - ix)^\eta (1 + ix )^{\bar{\eta}}.
    \end{equation} 
    The indicator function $\chi_A$ is defined to equal $1$ if $A$ is true and $0$ otherwise.
    Note that a simple linear change of variables maps the Jacobi weight
    as presented in (\ref{1.8}) to 
     \begin{equation}\label{1.8a} 
     x^a (1 - x)^b \chi_{0<x<1}.
     \end{equation}
    
    Eigenvalue probability density functions (PDFs) proportional to
     \begin{equation}\label{1.9}  
     \prod_{l=1}^N w(x_l) \prod_{1 \le j < k \le N} |x_k - x_j|^\beta,
  \end{equation}      
with    $w(x)$ a classical weight, and $\beta =1,2$ or $4$, are prominent
in classical random matrix theory. For example, with the choice
of $w(x)$ as a Gaussian, (\ref{1.9}) is the eigenvalue PDF
for the GOE when $\beta = 1$, the GUE when $\beta = 2$ and
the GSE when $\beta = 4$. Thus the exponent $\beta$ --- often referred
to as the Dyson index after the pioneering work \cite{Dy62c} --- corresponds
to the number of independent real parts in the corresponding number field.
Beyond these three special values of $\beta$,
for the classical weights there are constructions of random matrices with
eigenvalue PDF (\ref{1.9}) for general $\beta > 0$. In the case of the Jacobi
weight as written in (\ref{1.8}), this was first obtained in \cite{KN04}, while
for the Cauchy weight it was obtained in \cite{FR06}; for a text
book treatment see \cite[\S 3.11 \& \S 4.3.4]{Fo10}.
For the Gaussian, Laguerre and Jacobi weights, the theory of Selberg
correlation integrals as applied in \cite{RF19} tells us that there are integrable
structures by way of linear differential equations 
for the density of degree $\beta + 1$ for all $\beta$ even, and that duality
formulas extend this characterisation when $\beta$ is replaced by $4/\beta$.
From the differential equations, difference equations for the moments can be determined.
In the Jacobi case these are of the same degree, but in the Gaussian and
Laguerre cases their degree reduces by one to now be equal to $\beta$. In practice, beyond
the classical value $\beta =2,4$ (and by the duality, $\beta = 1$) it was not
feasible to make these differential or difference equations explicit. An exception
was $\beta = 6$ in the Gaussian case (and by duality $\beta = 2/3$), where the
seventh order differential equation, and sixth order difference equation were presented
explicity.

Absent from the study \cite{RF19} was consideration of analogous integrable
properties in the case of the Cauchy weight in (\ref{1.9}); see \cite[\S2.5, \S 3.9, \S 4.3.4]{Fo10} for
a textbook treatment relating to the latter.
 With $\beta=2$,  results of the
type obtained in \cite{RF19} for the Jacobi weight (\ref{1.8a}) (see also
\cite{Le04,MS11,CMOS18}) have recently been 
obtained for the Cauchy weight in the work of Assiotis et al.~\cite{ABGS20}, with the requirement that
the Cauchy parameter $\eta$ in (\ref{1.8}) be real.
In particular, it was shown that the sum of the $2p$-th and
$(2p+2)$-th moment can be identified as a continuous Hahn
polynomial in the variable $p$. In \cite{CMOS18} it has earlier
been shown that the difference between successive moments in
the Jacobi case of (\ref{1.9}) can be identified in terms of Wilson
polynomials from the Askey scheme (for other recent appearances of
polynomials from the Askey scheme in studies in random matrix
theory see \cite{FL19,FL19a,GGR20}).

Underpinning the present paper is the observation that integrations with
respect to the Cauchy ensemble can be computed by an analytic continuation
in the parameters $a,b$ of integrations in the Jacobi ensemble with weight defined
on $(-1,1)$. The details of such inter-relations are given in Section \ref{S2}.
From our work \cite{RF19} giving linear differential equations for the density
of the Jacobi ensemble defined on $(0,1)$ with $\beta =1$, $2$ and $4$,
a simple linear change of variables gives differential equations for
the corresponding Jacobi ensemble defined on $(-1,1)$. Applying the results of
Section \ref{S2} then gives us  linear differential equations for the density
of the Cauchy ensemble with $\beta =1$, $2$ and $4$, which we present
in Section \ref{S3}. We highlight the symmetric case $\eta=\bar{\eta}$ and take $\eta$ to be general complex only in the $\beta=2$ case in order to keep our presentation neat. In the $\beta = 2$ case with $\eta$ real,
 we reclaim the third order differential equation derived recently in \cite{ABGS20}. Moreover, the differential equations for the densities of the $\beta=1,2,4$ Cauchy ensembles simplify if written with  $(1+x^2) \rho_{(1)}^{(Cy)}(x)$
as the dependent variable;
in the Jacobi case, the inter-relations tell us that the analogous simplification
occurs with $(1-x^2) \rho_{(1)}^{(J)}(x)$ as the dependent variable. 

Two consequences are further developed. One is to quantify the limiting density and its moments in the symmetric
Cauchy ensemble with $\eta = - \beta (N - 1)/2 - 1- \hat{\alpha}\beta N/2$. It is also
possible to use differential equations to study $1/N^2$ corrections (for $\beta = 2$)
and $1/N$ corrections (for $\beta = 1,4$) of the type well known in the study of the GUE \cite{HZ86}. The other is the specification of (three-term for $\beta = 2$, five-term for $\beta = 1,4$) recurrences for the differences of successive even moments of the Jacobi ensembles on $(-1,1)$, which are equivalent (up to minus signs) to recurrences for the sums of successive even moments of the corresponding Cauchy ensembles. These recurrences are presented in Section \ref{S4}, along with a demonstration of the fact that the recurrence in the $\beta=2$, $a=b$ Jacobi case on $(-1,1)$ can be solved in terms of continuous Hahn polynomials. This latter fact is consistent, via the inter-relations of Section \ref{S2}, to an equivalent observation of \cite{ABGS20} on the associated recurrence in the Cauchy case.

\section{Relating the density for the Cauchy and Jacobi $\beta$-ensembles}\label{S2}
The density in the Cauchy $\beta$-ensemble is specified by
\begin{equation}\label{2.1}
\rho_{(1)}^{(Cy)}(x) = N {w_\beta^{(Cy)}(x) \over \mathcal N_N^{(Cy)}}
\int_{-\infty}^\infty dx_2 \cdots \int_{-\infty}^\infty dx_N \,
\prod_{l=2}^N w_\beta^{(Cy)}(x_l)| x - x_l |^\beta  \prod_{2 \le j < k \le N} | x_k - x_j |^\beta,
\end{equation}
where
\begin{equation}\label{2.2}
w_\beta^{(Cy)}(x) = (1 - i x)^{\alpha_{\beta,N}} (1 + i x)^{\bar{\alpha}_{\beta,N}} , \qquad
{\alpha}_{\beta,N} = - \beta (N - 1)/2 - 1 - \alpha.
\end{equation}
Here, $\alpha \in \mathbb C$ and it is required that ${\rm Re} \, \alpha > - 1/2$. The dependence
on $\beta$ and $N$ in the exponent $\alpha_{\beta,N}$ has been chosen so that this latter
requirement ensures that the normalisation $ \mathcal N_N^{(Cy)}$ is finite; see \cite[\S 3.9]{Fo10}
for working which gives insight into this convergence condition. In fact, the normalisation is
known explicitly (see \cite[Exercises 4.7 q.4(i)]{Fo10}),
\begin{equation}\label{2.2a}
\mathcal N_N^{(Cy)} =  2^{-\beta N(N-1)/2 - 2N{\rm Re} \, \alpha }
\pi^N \,M_N(\alpha,\bar{\alpha},\beta/2),
\end{equation}
where
\begin{equation}\label{2.2b}
M_N(a,b,\lambda) = \prod_{j=0}^{N - 1} { \Gamma (\lambda j+a+b+1)
\Gamma(\lambda (j+1)+1) \over
  \Gamma (\lambda j+a+1)\Gamma (\lambda j+b+1) \Gamma (1 + \lambda)}.
\end{equation}  

In relation to the Jacobi weight from (\ref{1.8}), the density for the corresponding
Jacobi $\beta$-ensemble is
\begin{equation}\label{2.3}
\rho_{(1)}^{(J)}(x) = N {w_\beta^{(J)}(x) \over \mathcal N_N^{(J)}}
\int_{-1}^1 dx_2 \cdots \int_{-1}^1 dx_N \,
\prod_{l=2}^N w_\beta^{(J)}(x_l) | x - x_l |^\beta  \prod_{2 \le j < k \le N} | x_k - x_j |^\beta, 
\end{equation}
where
\begin{equation}\label{2.4}
 w_\beta^{(J)}(x) = (1- x)^a (1 + x)^b\chi_{-1<x<1}.
 \end{equation}
 Here, the normalisation is finite for ${\rm Re} \, a,b > -1$ and is given in terms of 
 the Selberg integral (see \cite[\S 4.1]{Fo10})
 \begin{equation}\label{2.5}
  \mathcal N_N^{(J)} =  2^{ \beta N (N - 1)/2 + N (a + b+1)}  \,S_N(a,b,\beta/2),
 \end{equation} 
 where
  \begin{equation}\label{2.6}
 S_N(a,b,\lambda) =  \prod_{j=0}^{N-1} {\Gamma (a + 1 + \lambda j)
\Gamma (b + 1 + \lambda j)\Gamma(1+\lambda(j+1)) \over
\Gamma (a+b + 2 +\lambda (N + j-1)) \Gamma (1 + \lambda )}.
 \end{equation} 
 
  \subsection{The symmetric case}
 
 Let us now specialise the Cauchy weight to the case $\alpha$ real, and specialise
 the Jacobi weight to the case $a=b$. Both weights are then even functions of
 $x$. The following relation between multiple
 integrals over these weights holds true:
 
 \begin{proposition}\label{P2.1}
 Let $f(x_1,\dots,x_N)$ be a multivariable symmetric polynomial of degree $d$ in each
 $x_i$. For $2 \eta < -(d +1)$, define
 \begin{equation}\label{2.7} 
I_{N,\eta}^{(Cy)}[f(x_1,\ldots,x_N)] := \int_{-\infty}^\infty dx_1 \, (1 + x_1^2)^\eta \cdots \int_{-\infty}^\infty dx_N \, (1 + x_N^2)^\eta \,
 f(x_1,\dots,x_N),
 \end{equation}
 and for $\eta$ outside of this range, define $ I_{N,\eta}^{(Cy)}[f(x_1,\ldots,x_N)] $ by its analytic continuation. Also,
 in relation to the Jacobi weight with $a=b > -1$, define
  \begin{equation}\label{2.8} 
I_{N,a}^{(J)}[f(x_1,\ldots,x_N)] := \int_{-1}^1 dx_1 \, (1 - x_1^2)^a \cdots \int_{-1}^1 dx_N \, (1 - x_N^2)^a \,
 f(x_1,\dots,x_N),
 \end{equation} 
 and for $a$ outside of this range, define $ I_{N,a}^{(J)}[f(x_1,\ldots,x_N)]$ by its analytic continuation.
 We have
   \begin{equation}\label{2.9} 
  I_{N,\eta}^{(Cy)}[f(ix_1,\dots i x_N)] = (\tan \pi \eta)^N    I_{N,\eta}^{(J)}[f(x_1,\dots  x_N)].
  \end{equation} 
  \end{proposition}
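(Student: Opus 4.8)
The plan is to strip the problem down to a one-dimensional moment identity and then evaluate the resulting integrals explicitly as Euler Beta integrals. Since both $I_{N,\eta}^{(Cy)}$ and $I_{N,\eta}^{(J)}$ integrate against a product of identical one-body weights over the $N$ variables, I would begin by expanding $f$ into monomials, $f(x_1,\dots,x_N)=\sum_{\mathbf{k}}c_{\mathbf{k}}\,x_1^{k_1}\cdots x_N^{k_N}$ with each $k_j\le d$. By linearity and the product structure,
\begin{equation*}
I_{N,\eta}^{(Cy)}\bigl[f(ix_1,\dots,ix_N)\bigr]=\sum_{\mathbf{k}}c_{\mathbf{k}}\prod_{j=1}^{N}\int_{-\infty}^{\infty}(1+x^2)^\eta(ix)^{k_j}\,dx,
\end{equation*}
and likewise for $I_{N,\eta}^{(J)}$ with $(1-x^2)^\eta x^{k_j}$ in place of $(1+x^2)^\eta(ix)^{k_j}$. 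Monomials in which some $k_j$ is odd give zero on both sides (reflect $x_j\mapsto-x_j$), so the proposition reduces to the single-variable statement that, for every $k\in\mathbb{Z}_{\ge 0}$,
\begin{equation}\label{eq:1var}
\int_{-\infty}^{\infty}(1+x^2)^\eta(ix)^{k}\,dx=(\tan\pi\eta)\int_{-1}^{1}(1-x^2)^\eta x^{k}\,dx,
\end{equation}
since multiplying \eqref{eq:1var} over the $N$ factors produces exactly the power $(\tan\pi\eta)^N$ appearing in \eqref{2.9}.

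To establish \eqref{eq:1var} (only $k=2m$ even is nontrivial) I would substitute $t=x^2$ in each integral. The Jacobi side becomes, for $\mathrm{Re}\,\eta>-1$,
\begin{equation*}
\int_{-1}^{1}(1-x^2)^\eta x^{2m}\,dx=\int_0^1(1-t)^\eta t^{m-1/2}\,dt=\frac{\Gamma(m+\tfrac12)\,\Gamma(\eta+1)}{\Gamma(m+\eta+\tfrac32)},
\end{equation*}
and this meromorphic function of $\eta$ is, by definition, the analytic continuation meant in the statement. The Cauchy side, which converges at infinity precisely when $2\eta<-(2m+1)$ — guaranteed by $2\eta<-(d+1)$ and $2m\le d$ — becomes
\begin{equation*}
\int_{-\infty}^{\infty}(1+x^2)^\eta x^{2m}\,dx=\int_0^\infty(1+t)^\eta t^{m-1/2}\,dt=\frac{\Gamma(m+\tfrac12)\,\Gamma(-\eta-m-\tfrac12)}{\Gamma(-\eta)}.
\end{equation*}
Dividing the two evaluations and applying Euler's reflection formula $\Gamma(z)\Gamma(1-z)=\pi/\sin\pi z$ to the pairs $\{\Gamma(-\eta),\Gamma(\eta+1)\}$ and $\{\Gamma(-\eta-m-\tfrac12),\Gamma(m+\eta+\tfrac32)\}$, together with the elementary reduction $\sin\!\bigl(\pi(-\eta-m-\tfrac12)\bigr)=-(-1)^m\cos\pi\eta$, collapses all the Gamma factors to give
\begin{equation*}
\int_{-\infty}^{\infty}(1+x^2)^\eta x^{2m}\,dx=(-1)^m(\tan\pi\eta)\int_{-1}^{1}(1-x^2)^\eta x^{2m}\,dx.
\end{equation*}
Since $(ix)^{2m}=(-1)^m x^{2m}$, the factor $(-1)^m$ is absorbed and \eqref{eq:1var} follows. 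Reassembling the monomials, and noting that after these evaluations each side of \eqref{2.9} is manifestly a meromorphic function of $\eta$, the identity — proved wherever the Cauchy integrals converge — persists for all $\eta$ by analytic continuation.

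The one point demanding care is the interplay of the two analytic continuations: the Cauchy integral \eqref{2.7} converges only for $\mathrm{Re}\,\eta<-(d+1)/2$, while the Jacobi integral \eqref{2.8} converges only for $\mathrm{Re}\,\eta>-1$, and these half-planes are disjoint once $d\ge1$. Thus the two sides of \eqref{2.9} are never simultaneously given by convergent integrals, and the asserted equality must be read as an identity between meromorphic continuations; routing the argument through the explicit Gamma-function expressions above is the cleanest way to make this rigorous. One could instead argue geometrically, rotating the contour $[-1,1]$ onto the imaginary segment $[-i,i]$ via $x\mapsto ix$ — which converts $(1-x^2)^\eta$ into $(1+x^2)^\eta$ and $f(x)$ into $f(ix)$ — and then deforming, in the plane cut along $(-\infty,-1]\cup[1,\infty)$, onto the real axis, the decay hypothesis $2\eta<-(d+1)$ ensuring that the arcs at infinity drop out; the factor $\tan\pi\eta$ then emerges from the branch-cut jumps of $(1-z^2)^\eta$. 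Since the bookkeeping of branch choices in that deformation is at least as delicate as the reflection-formula computation, I would present the Beta-integral route as the main line of proof.
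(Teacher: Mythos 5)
Your proof is correct and follows essentially the same route as the paper: reduce the multiple integral to one-dimensional moments via the monomial expansion, evaluate both sides as Euler Beta integrals (the paper's equations (\ref{6.2a})--(\ref{6.2b})), and obtain the factor $\tan\pi\eta$ from the reflection formula, with $(ix)^{2m}=(-1)^m x^{2m}$ absorbing the residual sign. The extra discussion of the disjoint convergence domains and the contour-deformation alternative is a welcome clarification but does not change the substance of the argument.
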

  
  \begin{proof}
  Let $p \in \mathbb Z_{\ge 0}$. Suppose $2 \eta < - (p+1)$ and $a>-1$. A simple
  change of variables (for $k$ even)
  and use of the Euler beta integral evaluation (see \cite[Exercises 5.4 q.2]{Fo10})  shows
  \begin{equation}\label{6.2a}   
  \int_{-\infty}^\infty  (1 + x^2)^\eta x^{k} \, dx =
  \begin{cases} 0, & k \: {\rm odd}, \\
  (-1)^{k/2} \tan \pi \eta \, {\Gamma(1 + \eta) \Gamma((k+1)/2) \over
  \Gamma((k+3)/2 + \eta)}, & k \: {\rm even},
  \end{cases}
   \end{equation} 
  and
  \begin{equation}\label{6.2b}   
   \int_{-1}^1  (1 - x^2)^a x^{k} \, dx =
  \begin{cases} 0, & k \: {\rm odd}, \\
   \, {\Gamma(1 + a) \Gamma((k+1)/2) \over
  \Gamma((k+3)/2 + a)}, & k \: {\rm even}.
  \end{cases}
   \end{equation}  
 
 The functions $f(x_1,\ldots,x_N)$ and $f(ix_1,\ldots,ix_N)$ are polynomials, so the computation of $ I_{N,\eta}^{(Cy)}[f(ix_1,\ldots,ix_N)] $
 and $  I_{N,a}^{(J)}[f(x_1,\ldots,x_N)] $ reduces to the above one-dimensional integrals.
 Since, as analytic functions of $\eta$, we read off from the respective evaluations that
 $$
 \int_{-\infty}^\infty  (1 + x^2)^\eta (ix)^{2k} \, dx =    \tan \pi \eta   \int_{-1}^1  (1 - x^2)^\eta x^{2k} \, dx, 
 $$
 the stated result (\ref{2.9}) follows
 \end{proof}
 
 One immediate consequence is a relation between the normalisations in
 (\ref{2.1}) and (\ref{2.3}) in specialisations of the parameters that conform
 with Proposition \ref{P2.1}.
 
 \begin{corollary}\label{C2.2}
 Let $\alpha$ be real and related to $\alpha_{\beta,N}$ as in (\ref{2.2}). For $\beta$ even, we have
 \begin{equation}\label{2.10}
(-1)^{\beta N (N - 1) / 4} \mathcal N_N^{(Cy)} = ( -\tan  \pi \alpha )^N  \mathcal N_N^{(J)} \Big |_{a = b = - \beta (N - 1)/2 - 1 - \alpha},
 \end{equation}
 where both sides are to be interpreted as analytic functions in $\alpha$.
 \end{corollary}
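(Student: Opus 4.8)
The plan is to apply Proposition~\ref{P2.1} to the symmetric polynomial
\[
f(x_1,\dots,x_N)=\prod_{1\le j<k\le N}(x_k-x_j)^\beta .
\]
Since $\beta$ is even this is genuinely a polynomial, it is symmetric (being an even power of the antisymmetric Vandermonde), and it has degree $\beta(N-1)$ in each $x_i$, so the hypotheses of Proposition~\ref{P2.1} are met. First I would record that, with $\alpha$ real, the exponent $\alpha_{\beta,N}$ in (\ref{2.2}) is real, so that $w_\beta^{(Cy)}(x)=(1+x^2)^{\alpha_{\beta,N}}$ and, because $\beta$ is even, $\prod_{1\le j<k\le N}|x_k-x_j|^\beta=f(x_1,\dots,x_N)$; comparing with (\ref{2.1}), this says that the Cauchy normalisation is the $N$-fold integral $\mathcal N_N^{(Cy)}=I_{N,\alpha_{\beta,N}}^{(Cy)}[f]$. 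In the same way, putting $a=b=\alpha_{\beta,N}$ turns the Jacobi weight (\ref{2.4}) into $(1-x^2)^{\alpha_{\beta,N}}$ and gives $\mathcal N_N^{(J)}\big|_{a=b=\alpha_{\beta,N}}=I_{N,\alpha_{\beta,N}}^{(J)}[f]$. The closed forms (\ref{2.2a})--(\ref{2.2b}) and (\ref{2.5})--(\ref{2.6}) are what make these identifications legitimate as analytic functions of $\alpha$: each is meromorphic in the parameters, and, expanded monomial by monomial in $f$, both sides reduce to the one-dimensional evaluations (\ref{6.2a})--(\ref{6.2b}) underlying Proposition~\ref{P2.1}. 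On the Jacobi side the analytic continuation is genuinely needed, since $a=b=\alpha_{\beta,N}<-1$ once $N\ge 2$.

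Next I would substitute $x_j\mapsto ix_j$ inside $f$. One has
\[
f(ix_1,\dots,ix_N)=i^{\,\beta N(N-1)/2}\,f(x_1,\dots,x_N),
\]
and since $\beta$ is even the number $\beta N(N-1)/4=(\beta/2)\binom{N}{2}$ is an integer, whence $i^{\,\beta N(N-1)/2}=(-1)^{\beta N(N-1)/4}$. Feeding $f$ into the identity (\ref{2.9}) and using the identifications of the previous paragraph yields
\[
(-1)^{\beta N(N-1)/4}\,\mathcal N_N^{(Cy)}=(\tan\pi\alpha_{\beta,N})^N\,\mathcal N_N^{(J)}\big|_{a=b=\alpha_{\beta,N}} .
\]

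It then remains to recast the tangent factor. Writing $\alpha_{\beta,N}=-\beta(N-1)/2-1-\alpha$ and using that $\beta(N-1)/2=(\beta/2)(N-1)$ is an integer, the $\pi$-periodicity of $\tan$ gives $\tan\pi\alpha_{\beta,N}=\tan(-\pi\alpha)=-\tan\pi\alpha$, so $(\tan\pi\alpha_{\beta,N})^N=(-\tan\pi\alpha)^N$ and we arrive at (\ref{2.10}). Both sides being analytic in $\alpha$, the identity persists on the full domain of analyticity.

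I do not anticipate a real obstacle. The two points requiring care are the evaluation of the phase $i^{\,\beta N(N-1)/2}$, which relies on $(\beta/2)\binom{N}{2}$ being an integer (true precisely because $\beta$ is even), and the insistence throughout that $\mathcal N_N^{(Cy)}$ and $\mathcal N_N^{(J)}$ be read as the analytically continued multiple integrals of Proposition~\ref{P2.1} rather than as absolutely convergent ones --- the latter reading failing for the Jacobi partition function whenever $N\ge 2$.
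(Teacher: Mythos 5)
Your proposal is correct and follows essentially the same route as the paper: identify both normalisations as instances of the multiple integrals $I_{N,\eta}^{(Cy)}$ and $I_{N,a}^{(J)}$ applied to the symmetric polynomial $\prod_{j<k}(x_k-x_j)^\beta$ (legitimate since $\beta$ is even), then invoke Proposition~\ref{P2.1}. The paper leaves the extraction of the phase $i^{\beta N(N-1)/2}=(-1)^{\beta N(N-1)/4}$ and the reduction $\tan\pi\alpha_{\beta,N}=-\tan\pi\alpha$ implicit; you have correctly supplied both.
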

 
 \begin{proof}
 With $\alpha$ and thus $\alpha_{\beta,N}$ real, we have
 $$
 \mathcal N_N^{(Cy)} = \int_{-\infty}^\infty dx_1 (1 + x_1^2)^{\alpha_{\beta,N}} \cdots
   \int_{-\infty}^\infty dx_N (1 + x_N^2)^{\alpha_{\beta,N}} \prod_{1 \le j < k \le N} | x_k - x_j|^\beta.
   $$
   Also, with $a=b$, we have
   $$
 \mathcal N_N^{(J)} =    \int_{-1}^1 dx_1 \, (1 - x_1^2)^a \cdots   \int_{-1}^1 dx_N \, (1 - x_N^2)^a 
 \prod_{1 \le j < k \le N} | x_k - x_j|^\beta.
 $$
 For $\beta$ even, $ \prod_{1 \le j < k \le N} | x_k - x_j|^\beta$ is a multivariable symmetric polynomial, so
 application of Proposition \ref{P2.1} gives (\ref{2.10}).
 \end{proof}
 
 \begin{remark}
 The explicit form of the analytic continuations in $\alpha$ of both sides of (\ref{2.10})
 is known from (\ref{2.2a}) and (\ref{2.5}). In the notation therein, the equality
 (\ref{2.10}) requires
  \begin{multline}\label{2.11}
 (-1)^{\beta N (N - 1) / 4} (2 \pi )^N M_N(\alpha, \alpha,\beta/2) \\
 = (-\tan \pi \alpha)^N S_N(-\beta(N-1)/2-1-\alpha,
  -\beta (N - 1)/2 - 1 - \alpha, \beta/2).
  \end{multline}
  Under the assumption that $\beta$ is even, this can be checked upon the
  manipulation $j \mapsto N - 1 - j$ in the product defining $S_N$, and then use of the
  reflection equation for the gamma functions in that product. Agreement with the
  LHS of (\ref{2.11}) is obtained.
   \end{remark}
   
   We can make use of Corollary \ref{C2.2} and further apply Proposition \ref{P2.1}
   to relate the densities (\ref{2.1}) and (\ref{2.3}), along with the more general $k$-point correlation functions
\begin{multline} \label{n2.16}
\rho_{(k)}^{(Cy)}(x_1,\ldots,x_k):=\frac{N!}{(N-k)!\,\mathcal{N}_N^{(Cy)}}\int_{-\infty}^{\infty}dx_{k+1}\cdots
\\\cdots\int_{-\infty}^{\infty}dx_{N}\,\prod_{l=1}^Nw_{\beta}^{(Cy)}(x_l)\prod_{1\leq j<k\leq N}|x_k-x_j|^{\beta},
\end{multline}
\begin{multline} \label{n2.17}
\rho_{(k)}^{(J)}(x_1,\ldots,x_k):=\frac{N!}{(N-k)!\,\mathcal{N}_N^{(J)}}\int_{-\infty}^{\infty}dx_{k+1}\cdots
\\\cdots\int_{-\infty}^{\infty}dx_{N}\,\prod_{l=1}^Nw_{\beta}^{(J)}(x_l)\prod_{1\leq j<k\leq N}|x_k-x_j|^{\beta}.
\end{multline}
   
  \begin{proposition}\label{P2.4} 
  In the setting of Corollary \ref{C2.2}, the $k$-point correlation functions \eqref{n2.16} and \eqref{n2.17} above are related by
\begin{equation} \label{n2.18}
\rho_{(k)}^{(Cy)}(ix_1,\ldots,ix_k)=(-\cot\,\pi\alpha)^k\rho_{(k)}^{(J)}(x_1,\ldots,x_k)\Big|_{a=b=-\beta(N-1)/2-1-\alpha}.
\end{equation}
In particular, the densities \eqref{2.1} and \eqref{2.3}, themselves being the $1$-point correlation functions, obey the relation
   \begin{equation}\label{2.12}
   \rho_{(1)}^{(Cy)}(ix) = -\cot \pi \alpha \,  \rho_{(1)}^{(J)}(x)  \Big |_{a = b = - \beta (N - 1)/2 - 1 - \alpha}.
  \end{equation}
    \end{proposition}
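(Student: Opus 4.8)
The plan is to derive \eqref{n2.18} by combining Corollary \ref{C2.2} with a direct application of Proposition \ref{P2.1} to the multiple integrals appearing in the definitions \eqref{n2.16} and \eqref{n2.17}. First I would fix $\beta$ even so that the product $\prod_{1 \le j < k \le N}|x_k - x_j|^\beta$ is a genuine multivariable symmetric polynomial, and observe that the integrand in \eqref{n2.16}, with the weight $w_\beta^{(Cy)}$ specialised via \eqref{2.2} to exponent $\alpha_{\beta,N} = -\beta(N-1)/2 - 1 - \alpha$ with $\alpha$ real, factors as $\prod_{l=1}^N (1+x_l^2)^{\alpha_{\beta,N}}$ times that symmetric polynomial. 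Holding $x_1,\ldots,x_k$ fixed and performing the change of variables $x_j \mapsto i x_j$ simultaneously in the integration variables $x_1,\ldots,x_k$ and in the remaining integration variables $x_{k+1},\ldots,x_N$, the inner $(N-k)$-fold integral is precisely an instance of $I_{N-k,\alpha_{\beta,N}}^{(Cy)}$ applied to a symmetric polynomial in $x_{k+1},\ldots,x_N$ whose coefficients depend polynomially on the frozen variables $i x_1,\ldots, i x_k$ (the exponents appearing there stay even in each $x_j$, $j>k$, because $\beta$ is even). Thus Proposition \ref{P2.1} converts it to $(\tan\pi\alpha_{\beta,N})^{N-k}$ times the corresponding Jacobi integral $I_{N-k,\alpha_{\beta,N}}^{(J)}$, which is the inner integral in \eqref{n2.17} with $a=b=\alpha_{\beta,N}$.

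Next I would collect the prefactors. The left side $\rho_{(k)}^{(Cy)}(ix_1,\ldots,ix_k)$ carries the explicit factor $\prod_{l=1}^k (1 - (ix_l)^2)^{\alpha_{\beta,N}} = \prod_{l=1}^k (1+x_l^2)^{\alpha_{\beta,N}}$, which matches the explicit factor $\prod_{l=1}^k (1-x_l^2)^a$ on the Jacobi side with $a = \alpha_{\beta,N}$ — so these cancel against each other cleanly. The remaining bookkeeping is: a factor $(\tan\pi\alpha_{\beta,N})^{N-k}$ from the inner integral, the reciprocal normalisation factor $1/\mathcal N_N^{(Cy)}$ versus $1/\mathcal N_N^{(J)}$, and the combinatorial prefactor $N!/(N-k)!$ which is common to both. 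Using Corollary \ref{C2.2}, namely $(-1)^{\beta N(N-1)/4}\mathcal N_N^{(Cy)} = (-\tan\pi\alpha)^N \mathcal N_N^{(J)}|_{a=b=\alpha_{\beta,N}}$, the ratio $\mathcal N_N^{(J)}/\mathcal N_N^{(Cy)}$ equals $(-1)^{\beta N(N-1)/4}(-\tan\pi\alpha)^{-N}$; here one should also note $\tan\pi\alpha_{\beta,N} = \tan(\pi(-\beta(N-1)/2 - 1 - \alpha)) = \tan(-\pi\alpha \pm \pi\beta(N-1)/2)$, which for $\beta$ even reduces to $-\tan\pi\alpha$ up to the appropriate sign determined by the parity of $\beta(N-1)/2$. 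Multiplying through, the powers of $\tan\pi\alpha$ combine to $(\tan\pi\alpha)^{N-k-N} = (\tan\pi\alpha)^{-k}$, and the residual signs — those from $(-1)^{\beta N(N-1)/4}$ and from the $\pm$ shifts in each of the $N$ and $N-k$ tangent factors — must be tracked carefully and shown to collapse to $(-1)^k$, yielding the coefficient $(-\cot\pi\alpha)^k$ as claimed in \eqref{n2.18}. Setting $k=1$ then gives \eqref{2.12} immediately, since $\rho_{(1)}^{(Cy)} = \rho_{(1)}^{(Cy)}$ in \eqref{2.1} and $\rho_{(1)}^{(J)}$ in \eqref{2.3} are the $k=1$ instances of \eqref{n2.16} and \eqref{n2.17}.

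Finally, having established \eqref{n2.18} for $\beta$ even, I would invoke the interpretation of both sides as analytic functions of $\alpha$ (equivalently of the Jacobi parameters $a=b$): the Jacobi correlation function is, via the Selberg-type normalisation and the Euler beta evaluations \eqref{6.2b}, a ratio of products of gamma functions times a polynomial, hence has a meromorphic continuation in $a$, and \eqref{n2.18} is then an identity of meromorphic functions that holds on the open set where the Cauchy integral converges, hence everywhere by uniqueness of analytic continuation. I expect the main obstacle to be the sign bookkeeping: making sure that the sign $(-1)^{\beta N(N-1)/4}$ from Corollary \ref{C2.2}, together with the $2\pi$-vs-$\pi$ periodicity shifts $\tan\pi\alpha_{\beta,N} = (-1)^{\beta(N-1)/2}(-\tan\pi\alpha)$ raised to the powers $N$ and $N-k$, conspires so that only a clean factor $(-1)^k$ survives — this requires splitting into the cases $\beta(N-1)/2$ even or odd and checking the arithmetic, but it is routine. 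A secondary point worth a sentence is to confirm that the substitution $x_j \mapsto ix_j$ for $j>k$ is legitimate inside $I^{(Cy)}_{N-k,\alpha_{\beta,N}}$ in the sense of Proposition \ref{P2.1}: the relevant integrand, after freezing $x_1,\ldots,x_k$, is indeed a symmetric polynomial in $x_{k+1},\ldots,x_N$ of finite degree in each variable (because $\beta$ is an even integer and $N$ is finite), so the hypotheses of that proposition are met.
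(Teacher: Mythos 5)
Your proposal follows essentially the same route as the paper: substitute Corollary \ref{C2.2} for the normalisation, apply Proposition \ref{P2.1} to the inner $(N-k)$-fold integral (viewed as a symmetric polynomial in the remaining variables with the outer variables frozen), match the weight factors via $w_\beta^{(Cy)}(ix)=w_\beta^{(J)}(x)|_{a=b=\alpha_{\beta,N}}$, and collect signs and powers of $\tan\pi\alpha$. Two points in your sign bookkeeping deserve correction, though neither changes the outcome. First, your enumeration of residual signs omits the one that actually cancels $(-1)^{\beta N(N-1)/4}$: after the substitution, the full product of differences becomes $\prod_{1\le j<l\le N}(ix_l-ix_j)^\beta=i^{\beta N(N-1)/2}\prod_{1\le j<l\le N}(x_l-x_j)^\beta$, and it is this extracted factor $i^{\beta N(N-1)/2}=(-1)^{\beta N(N-1)/4}$ that the paper explicitly uses to kill the sign from Corollary \ref{C2.2}; with only the sources you list, the count would give $(-1)^{\beta N(N-1)/4}(-1)^k$ rather than $(-1)^k$. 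Second, there is no parity-dependent sign in $\tan\pi\alpha_{\beta,N}$: since $\tan$ has period $\pi$ and $-\beta(N-1)/2-1$ is an integer for $\beta$ even, one has cleanly $\tan\pi\alpha_{\beta,N}=-\tan\pi\alpha$, so no case split on the parity of $\beta(N-1)/2$ is needed. (Your parenthetical claim that the exponents of $x_j$, $j>k$, stay even is also not true of the cross terms $(x_q-ix_p)^\beta$, but it is not needed, since Proposition \ref{P2.1} applies to arbitrary symmetric polynomials.)
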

\begin{proof}
Since $\beta$ is an even integer, inserting the result of Corollary \ref{C2.2} into \eqref{n2.16} shows that the LHS of \eqref{n2.18} is given by
\begin{multline} \label{n2.20}
\frac{N!}{(N-k)!}\frac{(-1)^{\beta N(N-1)/4}(-\cot\,\pi\alpha)^N}{\mathcal{N}_N^{(J)}\big|_{a=b=-\beta(N-1)/2-1-\alpha}}\prod_{l=1}^kw_{\beta}^{(Cy)}(ix_l)\prod_{1\leq h<j\leq k}(ix_j-ix_h)^{\beta}
\\ \times\int_{-\infty}^{\infty}dx_{k+1}(1+x_{k+1}^2)^{\alpha_{\beta,N}}\cdots\int_{-\infty}^{\infty}dx_N(1+x_N^2)^{\alpha_{\beta,N}}
\\ \times\prod_{p=1}^k\prod_{q=k+1}^N(x_q-ix_p)^{\beta}\prod_{k+1\leq h'<j'\leq N}(x_{j'}-x_{h'})^{\beta}.
\end{multline}
From Proposition \ref{P2.1}, interpreting the third line of \eqref{n2.20} as a multivariable symmetric polynomial in $\{ix_l\}_{l=k+1}^N$ shows that the $(N-k)$-fold integral in the above is equal to
\begin{multline*}
\left(-\tan\,\pi\alpha\right)^{N-k}\int_{-1}^1dx_{k+1}(1-x_{k+1}^2)^{\alpha_{\beta,N}}\cdots\int_{-1}^1dx_N(1-x_N^2)^{\alpha_{\beta,N}}
\\\times\prod_{p=1}^k\prod_{q=k+1}^N(ix_q-ix_p)^{\beta}\prod_{k+1\leq h'<j'\leq N}(ix_{j'}-ix_{h'})^{\beta},
\end{multline*}
where we have additionally made the change of variables $x_{k+1}\mapsto-x_{k+1},\ldots,x_N\mapsto-x_N$. Substituting this expression into \eqref{n2.20}, extracting factors of $i$ from the products of differences to cancel the factor of $(-1)^{\beta N(N-1)/4}$, and observing that $w_{\beta}^{(J)}(x)\big|_{a=b=\alpha_{\beta,N}}=w_{\beta}^{(Cy)}(ix)$ gives the RHS of \eqref{n2.18}.
\end{proof}

    \begin{remark} \label{R2.5}
    Although established for $\beta$ even via reasoning based on  Proposition \ref{P2.1},
    an application of Carlson's theorem as familiar in the theory of the Selberg integral
    (see e.g.~\cite[\S 4.1]{Fo10}) shows that Proposition \ref{P2.4} remains true for general $\beta > 0$.
    \end{remark}
  
  For the particular values of $\beta$ even, $\beta = 2, 4$, we will use the identity \eqref{2.12} relating the density of the Cauchy ensemble for $\alpha$ real to (an analytic continuation of) the density of the Jacobi ensemble supported on $(-1,1)$ with $a=b$ and the same value of $\beta$ to study properties of the former. This is presented in Section \ref{S3}, but first we outline a relationship between the Cauchy ensemble when ${\rm Im}\,\alpha\neq0$ (also known as the non-symmetric case or the generalised Cauchy ensemble) and the Jacobi ensemble now requiring $a=\bar{b}$.

   \subsection{The non-symmetric case}
   A classical result of Cauchy \cite{Ca74} gives
  \begin{equation}\label{C.1}
  \int_{-\infty}^\infty {dt \over (1 - i t)^\gamma (1 + i t)^\delta} = 2^{2 - \gamma - \delta}\pi {\Gamma(\gamma+\delta-1) \over \Gamma(\gamma) \Gamma(\delta)}
   \end{equation}
   subject to the requirement that ${\rm Re} \, (\gamma+\delta) > 1$; outside of this range we consider the integral as
   defined by the analytic continuation given by the RHS. Use of the reflection equation for the
   gamma function allows this to be rewritten
  \begin{equation}\label{C.2} 
    \int_{-\infty}^\infty (1 - it)^\gamma (1 + it)^\delta \, dt =    2^{2 + \gamma+\delta} {\sin \pi \gamma \,\sin \pi \delta \over
    \sin \pi (\gamma+\delta) } {\Gamma(\gamma+1) \Gamma(\delta + 1) \over \Gamma (\gamma+\delta + 2)},
   \end{equation}
   subject now to the requirement  ${\rm Re} \, (\gamma+\delta) < - 1$ on the LHS.
   
   The form (\ref{C.2}) is to be compared against the Euler beta function evaluation
   \begin{equation}\label{C.3} 
   \int_0^1 t^c ( 1 - t)^d \, dt = {\Gamma(c+1) \Gamma(d+1) \over \Gamma(c + d + 2)}
     \end{equation}
     or, equivalently,
     \begin{equation}\label{C.3x}
  \int_{-1}^1 (1-t)^c ( 1 + t)^d \, dt =    2^{1+c+d}    {\Gamma(c+1) \Gamma(d+1) \over \Gamma(c + d + 2)} ,  
   \end{equation}    
     where on the LHS, it is required that ${\rm Re} \, c > -1$ and      ${\rm Re} \, d > -1$.
     The agreement in the gamma function dependence of both integrals allows for a relation between
     multiple integrals analogous to that in Proposition \ref{P2.1} to be derived.
   
     \begin{proposition}\label{P2.6} 
     Let $f(x_1,\dots,x_N)$ be a multivariable symmetric polynomial of degree $\tilde{d}$ in each
 $x_i$. For ${\rm Re} \,  (\gamma+\delta) < - \tilde{d} - 1$  define
 \begin{multline}\label{2.7a} 
\tilde{I}_{N,\gamma,\delta}^{(Cy)}[f(x_1,\ldots,x_N)] := \int_{-\infty}^\infty dx_1 \, (1 -  i x_1)^\gamma ( 1 + i x_1)^\delta  \cdots
\\ \cdots \int_{-\infty}^\infty dx_N \,   (1 -  i x_N)^\gamma ( 1 + i x_N)^\delta\,
 f(x_1,\dots,x_N),
 \end{multline}
 and for $\gamma,\delta$ outside of this range, define $ \tilde{I}_{N,\gamma,\delta}^{(Cy)}[f(x_1,\ldots,x_N)] $ by its analytic continuation. Also,
 in relation to the Jacobi weight with $c,d > -1$ define
  \begin{multline}\label{2.8a} 
\tilde{I}_{N,c,d}^{(J)}[f(x_1,\ldots,x_N)] := \int_{-1}^1 dx_1 \, (1-x_1)^c (1 + x_1)^d \cdots
\\ \cdots \int_{-1}^1 dx_N \,  (1 - x_N)^c (1 + x_N)^d \,
 f(x_1,\dots,x_N),
 \end{multline} 
 and for $c,d$ outside of this range, define $ \tilde{I}_{N,c,d}^{(J)}[f(x_1,\ldots,x_N)]$ by its analytic continuation.
 We have
   \begin{equation}\label{2.9a} 
  \tilde{I}_{N,\gamma,\delta}^{(Cy)}[f(1-ix_1,\dots ,1-i x_N )] = \left ( 2 {\sin \pi \gamma\, \sin \pi \delta \over
    \sin \pi (\gamma+\delta) }  \right )^N
     \tilde{I}_{N,\gamma,\delta}^{(J)}[f(1-x_1,\dots  1-x_N)].
  \end{equation} 
  \end{proposition}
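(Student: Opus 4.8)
The plan is to mirror the proof of Proposition \ref{P2.1} exactly, reducing the multidimensional statement to a one-dimensional integral comparison. First I would take $f$ to be a single monomial $x^k$ with $k \in \mathbb{Z}_{\ge 0}$ and evaluate the two relevant one-dimensional integrals. On the Jacobi side, expanding $(1-x)^c(1+x)^d$ around the shifted variable, or more directly substituting and using \eqref{C.3x}, gives
\begin{equation*}
\int_{-1}^1 (1-x)^c(1+x)^d (1-x)^k \, dx = 2^{1+c+d+k} \frac{\Gamma(c+k+1)\Gamma(d+1)}{\Gamma(c+d+k+2)},
\end{equation*}
which is just \eqref{C.3x} with $c \mapsto c+k$. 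On the Cauchy side, the analogous computation using \eqref{C.2} gives
\begin{equation*}
\int_{-\infty}^\infty (1-ix)^\gamma(1+ix)^\delta (1-ix)^k \, dx = 2^{2+\gamma+\delta+k} \frac{\sin\pi\gamma\,\sin\pi\delta}{\sin\pi(\gamma+\delta)} \cdot \frac{\sin\pi(\gamma+\delta+k)}{\sin\pi\gamma\,\sin\pi\delta}\cdot\frac{\Gamma(\gamma+k+1)\Gamma(\delta+1)}{\Gamma(\gamma+\delta+k+2)},
\end{equation*}
wait --- more carefully, \eqref{C.2} with $\gamma \mapsto \gamma+k$ reads
\begin{equation*}
\int_{-\infty}^\infty (1-ix)^{\gamma+k}(1+ix)^\delta \, dx = 2^{2+\gamma+\delta+k}\frac{\sin\pi(\gamma+k)\,\sin\pi\delta}{\sin\pi(\gamma+\delta+k)}\cdot\frac{\Gamma(\gamma+k+1)\Gamma(\delta+1)}{\Gamma(\gamma+\delta+k+2)}.
\end{equation*}
Since $\sin\pi(\gamma+k) = (-1)^k \sin\pi\gamma$ and $\sin\pi(\gamma+\delta+k) = (-1)^k\sin\pi(\gamma+\delta)$, the sign factors cancel and one obtains exactly $2^{1+k}\,\bigl(\sin\pi\gamma\sin\pi\delta / \sin\pi(\gamma+\delta)\bigr)$ times the Jacobi integral displayed above. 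Thus, as analytic functions of $\gamma,\delta$,
\begin{equation*}
\int_{-\infty}^\infty (1-ix)^\gamma(1+ix)^\delta (1-ix)^k \, dx = \frac{2\sin\pi\gamma\,\sin\pi\delta}{\sin\pi(\gamma+\delta)}\int_{-1}^1(1-x)^\gamma(1+x)^\delta(1-x)^k\,dx,
\end{equation*}
with the parameters in the two integrands taking the common values $c=\gamma$, $d=\delta$.

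Next I would observe that both $f(1-ix_1,\dots,1-ix_N)$ and $f(1-x_1,\dots,1-x_N)$ are genuine polynomials (in the $x_j$) --- this is where symmetry of $f$ and the degree bound $\tilde{d}$ enter, guaranteeing termwise convergence of the integrals in the range ${\rm Re}(\gamma+\delta) < -\tilde{d}-1$ --- so that both $\tilde{I}^{(Cy)}$ and $\tilde{I}^{(J)}$ decompose as finite linear combinations of products of the one-dimensional integrals above, one factor per variable. Applying the one-dimensional identity in each of the $N$ slots contributes exactly one factor of $2\sin\pi\gamma\sin\pi\delta/\sin\pi(\gamma+\delta)$ per variable, i.e.\ the $N$-th power appearing in \eqref{2.9a}, with the polynomial combinatorics on the two sides matching term by term. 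For parameters outside the stated convergence range, both sides are defined by analytic continuation (the RHS of \eqref{C.2} and of \eqref{C.3x} being entire in $\gamma,\delta$ away from the gamma-function poles), and since the identity holds on an open set it persists under continuation; this is the standard Carlson-type argument already invoked in Remark \ref{R2.5}.

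The main obstacle, as in Proposition \ref{P2.1}, is purely bookkeeping rather than conceptual: one must be careful that the substitution $f(x_1,\dots,x_N) \mapsto f(1-ix_1,\dots,1-ix_N)$ really does reduce to the monomial integrals $\int (1-ix)^\gamma(1+ix)^\delta (1-ix)^k\,dx$ and \emph{not} to integrals of $(1+ix)^k(1-ix)^\gamma(1+ix)^\delta$ or mixed powers --- writing $f$ in the basis of powers of $(1-x)$ (equivalently $(1-ix)$) for each variable is what makes the shift clean, so I would expand $f$ in that basis from the outset. The only other point requiring a line of justification is the cancellation of the sine factors $\sin\pi(\gamma+k) = (-1)^k\sin\pi\gamma$, which is immediate; with that in hand the proof is complete.
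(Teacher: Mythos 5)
Your proposal is correct and takes essentially the same route as the paper's own proof: reduce to the one-dimensional integrals \eqref{C.2a} and \eqref{C.3a} obtained from \eqref{C.2} and \eqref{C.3x} with $\gamma\mapsto\gamma+k$, note that the $(-1)^k$ factors from $\sin\pi(\gamma+k)$ and $\sin\pi(\gamma+\delta+k)$ cancel to leave the $k$-independent constant $2\sin\pi\gamma\sin\pi\delta/\sin\pi(\gamma+\delta)$ of \eqref{C.4a}, and then use the polynomiality of $f$ (expanded in powers of $1-x_j$) together with analytic continuation. The only blemish is the stray ``$2^{1+k}$'' in your prose, which should read $2$, as your own final displayed identity correctly states.
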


  \begin{proof}
  For $p \in \mathbb Z_{\ge 0}$, it follows from (\ref{C.2}) and (\ref{C.3x}) upon setting $c=\gamma$ and $d=\delta$ that
   \begin{equation}\label{C.2a} 
    \int_{-\infty}^\infty (1 - it)^{\gamma+p} (1 + it)^\delta \, dt =    2^{2 + \gamma+\delta+p} {\sin \pi \gamma\, \sin \pi \delta \over
    \sin \pi (\gamma+\delta) } {\Gamma(\gamma+p+1) \Gamma(\delta + 1) \over \Gamma (\gamma+\delta+p + 2)},
   \end{equation}
  and
  \begin{equation}\label{C.3a} 
   \int_{-1}^1 (1 - t)^{\gamma+p} ( 1 + t)^\delta \, dt =  2^{1 + \gamma+\delta+p}  {\Gamma(\gamma+p+1) \Gamma(\delta+1) \over \Gamma(\gamma+\delta+p + 2)}.
     \end{equation}
 Hence, in the sense of analytic continuation,
   \begin{equation}\label{C.4a}     
     \int_{-\infty}^\infty (1 - it)^{\gamma+p} (1 + it)^\delta \, dt =     2 {\sin \pi \gamma\, \sin \pi \delta \over
    \sin \pi (\gamma+\delta) }   \int_{-1}^1 (1-t)^{\gamma+p} ( 1 + t)^\delta \, dt.
  \end{equation}    
     
     The stated result now follows from the assumption that $f(x_1,\ldots,x_N)$ in (\ref{2.7a}) and (\ref{2.8a})
     is a polynomial and so the evaluation of the multiple integrals reduces to the one-dimensional
     integrals (\ref{C.2a}) and (\ref{C.3a}), which are related by (\ref{C.4a}).
     
     \end{proof}
     
We can use Proposition \ref{P2.6} to relate the normalisations (\ref{2.2a}) and (\ref{2.5})
in the case that $\alpha$ in (\ref{2.2}) is complex.
   \begin{corollary}\label{C2.7} 
   Let $\alpha$ be, in general, complex and related to $\alpha_{\beta,N}$ as
   in (\ref{2.2}). For $\beta$ even,
   \begin{equation}\label{2.10a}
(-1)^{\beta N (N - 1) / 4} \mathcal N_N^{(Cy)} = \left( -2{ \sin  \pi \alpha \,  \sin  \pi \bar{\alpha} \over \sin \pi (\alpha + \bar{\alpha}) }  \right)^N  \mathcal N_N^{(J)} \Big |_{\bar{a}  = b = - \beta (N - 1)/2 - 1 - \alpha},
 \end{equation}
 where the RHS is to be regarded as defined by its analytic continuation.

  \end{corollary}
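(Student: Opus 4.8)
The plan is to mirror the proof of Corollary \ref{C2.2}, with Proposition \ref{P2.6} in place of Proposition \ref{P2.1}. Since $\beta$ is an even integer, $\prod_{1\leq j<k\leq N}(x_k-x_j)^\beta$ is a multivariable symmetric polynomial, of degree $\beta(N-1)$ in each variable, and taking $\gamma=\alpha_{\beta,N}$, $\delta=\bar\alpha_{\beta,N}$ one has $\mathcal{N}_N^{(Cy)}=\tilde{I}_{N,\gamma,\delta}^{(Cy)}\big[\prod_{1\leq j<k\leq N}(x_k-x_j)^\beta\big]$. First I would set $g(y_1,\dots,y_N):=\prod_{1\leq j<k\leq N}(y_k-y_j)^\beta$ and use the identity $(1-ix_k)-(1-ix_j)=-i(x_k-x_j)$ to rewrite the Cauchy integrand as $i^{\beta N(N-1)/2}g(1-ix_1,\dots,1-ix_N)$; applying \eqref{2.9a} then expresses $\mathcal{N}_N^{(Cy)}$ as $i^{\beta N(N-1)/2}\big(2\sin\pi\gamma\,\sin\pi\delta/\sin\pi(\gamma+\delta)\big)^N$ times $\tilde{I}_{N,\gamma,\delta}^{(J)}[g(1-x_1,\dots,1-x_N)]$.

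Next I would identify the Jacobi factor. Because $(1-x_k)-(1-x_j)=x_j-x_k$ and $\beta$ is even, $g(1-x_1,\dots,1-x_N)=\prod_{1\leq j<k\leq N}(x_k-x_j)^\beta$ again, so $\tilde{I}_{N,\gamma,\delta}^{(J)}[g(1-x_1,\dots,1-x_N)]$ is precisely $\mathcal{N}_N^{(J)}$ with $a=\gamma=\alpha_{\beta,N}$ and $b=\delta=\bar\alpha_{\beta,N}$ --- understood via the Selberg product \eqref{2.5}--\eqref{2.6}, since for $N\geq 2$ these parameters lie outside the convergence region. The change of variables $x_l\mapsto-x_l$ (all $l$) leaves the Vandermonde power invariant and swaps $a$ with $b$, so $\mathcal{N}_N^{(J)}$ is symmetric in $a\leftrightarrow b$; as $\overline{\alpha_{\beta,N}}=\bar\alpha_{\beta,N}$, the factor is thus $\mathcal{N}_N^{(J)}\big|_{\bar a=b=\alpha_{\beta,N}}$, matching the RHS of \eqref{2.10a}.

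It then remains to simplify the scalar prefactor. Writing $\beta=2m$ and using that $N(N-1)$ is even, the calculation gives $(-1)^{\beta N(N-1)/4}\,i^{\beta N(N-1)/2}=1$, so multiplying the resulting identity through by $(-1)^{\beta N(N-1)/4}$ puts it in the form \eqref{2.10a}. For the sine quotient, since $\beta(N-1)/2=m(N-1)\in\mathbb{Z}$, the addition formula for $\sin$ gives $\sin\pi\gamma=(-1)^{m(N-1)}\sin\pi\alpha$, $\sin\pi\delta=(-1)^{m(N-1)}\sin\pi\bar\alpha$ and $\sin\pi(\gamma+\delta)=-\sin\pi(\alpha+\bar\alpha)$, whence $2\sin\pi\gamma\sin\pi\delta/\sin\pi(\gamma+\delta)=-2\sin\pi\alpha\sin\pi\bar\alpha/\sin\pi(\alpha+\bar\alpha)$; this is exactly the base of the $N$-th power in \eqref{2.10a}. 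I should also record that the hypothesis ${\rm Re}\,\alpha>-1/2$ forces ${\rm Re}\,(\gamma+\delta)<-\beta(N-1)-1$, so \eqref{2.9a} applies directly with no continuation on the Cauchy side; the only continuation needed is on the Jacobi side, where \eqref{2.5}--\eqref{2.6} defines $\mathcal{N}_N^{(J)}$ at the (complex, real part $<-1$) parameters in question. The main obstacle is entirely bookkeeping --- correctly tracking the phase $i^{\beta N(N-1)/2}$ against $(-1)^{\beta N(N-1)/4}$ and the sign flips in the sine ratio --- with nothing conceptually deep beyond Proposition \ref{P2.6}. As a cross-check one can instead verify \eqref{2.10a} directly from \eqref{2.2a} and \eqref{2.5}--\eqref{2.6} via the substitution $j\mapsto N-1-j$ in the product defining $S_N$ followed by the gamma reflection formula, exactly as in the Remark after Corollary \ref{C2.2}.
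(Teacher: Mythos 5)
Your proof is correct and follows essentially the same route as the paper: apply Proposition \ref{P2.6} to the Vandermonde power after extracting the phase from $(1-ix_k)-(1-ix_j)=-i(x_k-x_j)$ (your $i^{\beta N(N-1)/2}$ is the paper's $(-1)^{\beta N(N-1)/4}$ factor). You simply make explicit several steps the paper leaves implicit — the cancellation of the phases, the conversion of $\sin\pi\gamma\,\sin\pi\delta/\sin\pi(\gamma+\delta)$ into $-\sin\pi\alpha\,\sin\pi\bar{\alpha}/\sin\pi(\alpha+\bar{\alpha})$, and the $a\leftrightarrow b$ symmetry of $\mathcal N_N^{(J)}$ — all of which check out.
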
    
  
  \begin{proof}
  We observe that for $\beta$ even, the product of differences in the definition of the
  normalisations is a polynomial, and moreover,
  $$ 
  (x_k - x_j)^\beta =  (-1)^{\beta/2}  ((1 - i x_k) - (1 - i x_j))^\beta.
  $$
  The result now follows from the definitions of the normalisations
  and the identity (\ref{2.9a}).
  \end{proof}
  
  \begin{remark} 1.~Analogous to (\ref{2.11}), the identity (\ref{2.10a}) can be
  cast in terms of the integrals in (\ref{2.2a}) and (\ref{2.5}), giving
  \begin{multline}\label{2.11a}
 (-1)^{\beta N (N - 1) / 4} \pi^N M_N(\alpha, \bar{\alpha},\beta/2) \\
 =   \left(-{ \sin  \pi \alpha\,   \sin  \pi \bar{\alpha} \over \sin \pi (\alpha + \bar{\alpha}) }  \right)^N S_N(-\beta(N-1)/2-1-\alpha,
  -\beta (N - 1)/2 - 1 - \bar{\alpha}, \beta/2).
  \end{multline}
 This can be verified using the same steps as for (\ref{2.11}). \\
 2.~In the case $\alpha$ is real, (\ref{2.10a}) reduces to (\ref{2.10}).
 \end{remark}

 We can make use of  Corollary \ref{C2.7} and a further application of
 Proposition \ref{P2.6} in the definitions of the
  $k$-point correlation functions (\ref{n2.16}) and (\ref{n2.17}) to deduce the analogue of Proposition
  \ref{P2.4} in the non-symmetric case. We skip the proof, which essentially amounts to combining the proofs of Proposition \ref{P2.4} and Corollary \ref{C2.7}.
   
  \begin{proposition}\label{P2.9} 
  In the setting of Corollary \ref{C2.7},
   \begin{equation}\label{2.12a}
   \rho_{(k)}^{(Cy)}(ix) = \left(-{\sin \pi (\alpha + \bar{\alpha}) \over 2 \sin \pi \alpha\, \sin \pi \bar{\alpha}}\right)^k  \rho_{(k)}^{(J)}(x)  \Big |_{\bar{a} = b=- \beta (N - 1)/2 - 1 - \alpha}.
  \end{equation}
    \end{proposition}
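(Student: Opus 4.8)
The plan is to combine the proofs of Proposition~\ref{P2.4} and Corollary~\ref{C2.7}: one runs the argument for Proposition~\ref{P2.4} with Proposition~\ref{P2.1} and Corollary~\ref{C2.2} replaced at each occurrence by their non-symmetric analogues Proposition~\ref{P2.6} and Corollary~\ref{C2.7}. Throughout one takes $\beta$ to be an even integer, as in the setting of Corollary~\ref{C2.7}, so that every product of differences appearing in \eqref{n2.16} is a genuine polynomial and $(x_k-x_j)^\beta=(-1)^{\beta/2}\big((1-ix_k)-(1-ix_j)\big)^\beta$. Abbreviate $\gamma=\alpha_{\beta,N}=-\beta(N-1)/2-1-\alpha$ and $\delta=\bar{\alpha}_{\beta,N}=-\beta(N-1)/2-1-\bar{\alpha}$, and record the identity $w_\beta^{(Cy)}(ix)=(1+x)^\gamma(1-x)^\delta=w_\beta^{(J)}(x)\big|_{a=\delta,\,b=\gamma}$, whose parameters satisfy $\bar{a}=b=-\beta(N-1)/2-1-\alpha$, exactly the specialisation appearing on the right-hand side of \eqref{2.12a} (all identities below being understood, where necessary, as between analytic continuations in $\alpha$, as in Corollary~\ref{C2.7}).

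First I would insert the identity of Corollary~\ref{C2.7} for $1/\mathcal{N}_N^{(Cy)}$ into \eqref{n2.16} evaluated at $(ix_1,\dots,ix_k)$; just as in the passage from \eqref{n2.16} to \eqref{n2.20}, this factors off the external term $\prod_{l=1}^k w_\beta^{(Cy)}(ix_l)\prod_{1\le h<j\le k}(ix_j-ix_h)^\beta$ together with an $(N-k)$-fold Cauchy integral over the weight $(1-ix_q)^\gamma(1+ix_q)^\delta$. The crux is then to read the integrand of that $(N-k)$-fold integral as a multivariable symmetric polynomial in the shifted variables $\{1-ix_l\}_{l=k+1}^N$ and apply Proposition~\ref{P2.6} with $N\mapsto N-k$ and parameters $\gamma,\delta$, which replaces it by the corresponding Jacobi integral against $(1-x_q)^\gamma(1+x_q)^\delta$ at the cost of the factor $\big(2\sin\pi\gamma\,\sin\pi\delta/\sin\pi(\gamma+\delta)\big)^{N-k}$; since $\beta$ is even, the reflection identities $\sin\pi\gamma=(-1)^{\beta(N-1)/2}\sin\pi\alpha$, $\sin\pi\delta=(-1)^{\beta(N-1)/2}\sin\pi\bar{\alpha}$ and $\sin\pi(\gamma+\delta)=-\sin\pi(\alpha+\bar{\alpha})$ --- the same simplifications underlying the minus sign in Corollary~\ref{C2.7} --- turn this into $\big(-2\sin\pi\alpha\,\sin\pi\bar{\alpha}/\sin\pi(\alpha+\bar{\alpha})\big)^{N-k}$. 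Next, performing the change of variables $x_q\mapsto-x_q$ for $q=k+1,\dots,N$ interchanges $\gamma$ and $\delta$ in the Jacobi weight, producing exactly $w_\beta^{(J)}(x_q)\big|_{a=\delta,\,b=\gamma}$ in the orientation of \eqref{2.3}; extracting the powers of $i$ from $\prod(ix_q-ix_p)^\beta$, from $\prod(ix_{j'}-ix_{h'})^\beta$, and from the external $\prod(ix_j-ix_h)^\beta$ then rebuilds the product of differences defining $\rho_{(k)}^{(J)}$, the total power accumulated being $i^{\beta N(N-1)/2}=(-1)^{\beta N(N-1)/4}$, which cancels the sign carried over from Corollary~\ref{C2.7}. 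Collecting the prefactors, the factor $\big(-2\sin\pi\alpha\,\sin\pi\bar{\alpha}/\sin\pi(\alpha+\bar{\alpha})\big)^{-N}$ from Corollary~\ref{C2.7} combines with the $\big(-2\sin\pi\alpha\,\sin\pi\bar{\alpha}/\sin\pi(\alpha+\bar{\alpha})\big)^{N-k}$ above, and the identity $w_\beta^{(Cy)}(ix_l)=w_\beta^{(J)}(x_l)\big|_{a=\delta,\,b=\gamma}$ turns the remaining data into $\rho_{(k)}^{(J)}$, leaving the net coefficient $\big(-2\sin\pi\alpha\,\sin\pi\bar{\alpha}/\sin\pi(\alpha+\bar{\alpha})\big)^{-k}=\big(-\sin\pi(\alpha+\bar{\alpha})/(2\sin\pi\alpha\,\sin\pi\bar{\alpha})\big)^{k}$, which is \eqref{2.12a}.

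The step I expect to be the main obstacle is the bookkeeping of phases: one must track the $(-1)^{\beta N(N-1)/4}$ produced by Corollary~\ref{C2.7}, the $(-1)^{\beta/2}$ picked up for each of the $\binom{N}{2}$ pairs when passing to the shifted variables $1-ix_l$, the $i^\beta$ picked up for each pair when reading $(ix_j-ix_h)^\beta$ as $i^\beta(x_j-x_h)^\beta$, and the reflection-formula signs hidden inside $\sin\pi\gamma$, $\sin\pi\delta$ and $\sin\pi(\gamma+\delta)$, and verify that they conspire to cancel with no residual sign, exactly as they do in the proofs of Corollary~\ref{C2.7} and Proposition~\ref{P2.4}. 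The one feature absent from the proof of Proposition~\ref{P2.4}, where $a=b$, is that here $\gamma\neq\delta$ in general, so the change of variables $x_q\mapsto-x_q$ is genuinely needed in order to align the Jacobi weight delivered by Proposition~\ref{P2.6} with the orientation $\bar{a}=b$ demanded in \eqref{2.12a}. As in Remark~\ref{R2.5}, once \eqref{2.12a} is established for $\beta$ even it extends to arbitrary $\beta>0$ by Carlson's theorem, although only the even-$\beta$ case is required for the statement.
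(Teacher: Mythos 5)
Your proposal is correct and follows exactly the route the paper prescribes: the paper explicitly omits this proof, stating only that it "amounts to combining the proofs of Proposition \ref{P2.4} and Corollary \ref{C2.7}", and your argument carries out precisely that combination, with the phase bookkeeping (the $(-1)^{\beta N(N-1)/4}$, the reflection-formula signs in $\sin\pi\gamma$, $\sin\pi\delta$, $\sin\pi(\gamma+\delta)$, and the total power $i^{\beta N(N-1)/2}$) and the extra change of variables $x_q\mapsto -x_q$ needed to align $a=\delta$, $b=\gamma$ all handled correctly.
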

    
    \begin{remark}
  1.~In the case $\alpha$ is real, (\ref{2.12a}) reduces to (\ref{2.12}).   \\
  2.~For $\beta = 1,2$ and $4$, there are expressions in terms of orthogonal polynomials
  for both sides of (\ref{2.12a}) independently \cite{AFNV00, FLT20}. These can checked
  to be consistent with (\ref{2.12a}), using the fact that the orthogonal polynomials
  associated with the Jacobi weight in  (\ref{1.8}) are the Jacobi polynomials $P_N^{(a,b)}(x)$,
  while those associated with the Cauchy weight in (\ref{1.8}) are the 
  Jacobi polynomials $i^{-N}P_N^{(\eta,\bar{\eta})}(ix)$.
  \end{remark}

  \section{Differential equations}\label{S3}
  \subsection{The symmetric case $\alpha$ real}
  In our previous work \cite{RF19}, a third order linear differential equation was obtained for
  the Jacobi ensemble in the case $\beta = 2$ defined with weight supported on $(0,1)$ specified by
   \begin{equation}\label{3.1}
   \tilde{w}^{(J)}(x)= x^a (1 - x)^b\chi_{0<x<1},
   \end{equation}
  and a fifth order differential equation for this version of the Jacobi ensemble in the case $\beta = 4$ or
  $\beta = 1$. We note that the weight (\ref{3.1}) maps to the Jacobi weight in (\ref{2.4}) by the
  change of variables $x \mapsto (1-x)/2$.  Making this mapping and furthermore setting $b = a$, we
 can read off from \cite[Thms.~2.1 and 2.2]{RF19} the corresponding differential equations
 satisfied by the density $ \rho_{(1)}^{(J)}(x) |_{a=b}=  \rho_{(1)}^{(J)}(x;\beta) |_{a=b}$.
 
 \begin{proposition}\label{P3.1}
 Define
    \begin{multline}\label{3.2} 
   \mathcal D_{2,N}^{(J)} = (1- x^2 )^3 {d^3 \over d x^3} - 8 x ( 1 - x^2 )^2 {d^2 \over d x^2} \\
   - 2 (1- x^2 ) [3 - 2N^2 - 4 a N + (2 (a + N)^2-7) x^2] {d \over dx} \\
   + 4 x ( a^2 + 1  - N^2 - 2 a N + (a + N)^2 x^2 - x^2)
   \end{multline}
   and for $\beta=1$ and $4$,
\begin{multline}\label{3.3} 
   \mathcal D_{\beta,N}^{(J)} = 4(1-x^2)^5{d^5 \over d x^5}-80x(1-x^2)^4{d^4 \over d x^4}+(5\tilde{c}^2-493)(1-x^2)^4{d^3 \over d x^3} \\-4(5\tilde{a}-88)(1-x^2)^3{d^3 \over d x^3}+16(11\tilde{a}-8)x(1-x^2)^2{d^2 \over d x^2}-2(19\tilde{c}^2-539)x(1-x^2)^3{d^2 \over d x^2} \\
+(\tilde{c}^4-64\tilde{c}^2+719)(1-x^2)^3{d \over d x}-8\left[(\tilde{c}^2-45)(\tilde{a}-3)-124\right](1-x^2)^2{d \over d x} \\
+16\left[(\tilde{a}-7)^2-65\right](1-x^2){d \over d x}-(\tilde{c}^2-9)^2x(1-x^2)^2 \\
+4\left[4(\tilde{c}^2-9)+(3\tilde{c}^2-35)\tilde{a}\right]x(1-x^2)-32\tilde{a}^2x,
   \end{multline}
with
\begin{equation*}
\tilde{a}:={a \over \beta/2-1}\left({a \over \beta/2-1}-2\right),\quad \tilde{c}={2a \over \beta/2-1}+4(\beta/2-1)N-1,
\end{equation*}
following \cite{RF19}. Then for $\beta=1,2$ and $4$, we have
\begin{equation} \label{3.4} 
\mathcal D_{\beta,N}^{(J)}\,\rho_{(1)}^{(J)}(x;\beta)|_{a=b}=0.
\end{equation}
   \end{proposition}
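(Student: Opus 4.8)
The plan is to obtain Proposition \ref{P3.1} directly from \cite[Thms.~2.1 and 2.2]{RF19} by the affine change of variables flagged in the text, so that no fresh random-matrix input is required. Those theorems supply, for the density $\tilde\rho_{(1)}^{(J)}(t;\beta)$ of the Jacobi $\beta$-ensemble on $(0,1)$ with weight $\tilde w^{(J)}(t)=t^a(1-t)^b$, a linear ordinary differential operator with polynomial coefficients in $t$, $a$ and $b$ --- of order three when $\beta=2$, and of order five when $\beta=1,4$ (the latter obtained in \cite{RF19} via the $\beta\leftrightarrow 4/\beta$ duality, which is why the duality-adapted parameters $\tilde a,\tilde c$ of the statement arise) --- that annihilates this density. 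The first step is simply to recall these operators.

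Second, I would substitute $t=(1-x)/2$, which maps $(0,1)$ onto $(-1,1)$ and, as recorded just below (\ref{3.1}), carries $\tilde w^{(J)}(t)$ to a constant multiple of $w_\beta^{(J)}(x)=(1-x)^a(1+x)^b$; correspondingly $\rho_{(1)}^{(J)}(x;\beta)$ is a constant multiple of $\tilde\rho_{(1)}^{(J)}((1-x)/2;\beta)$, the constant (a Jacobian times any normalisation convention) being irrelevant since the differential equation is homogeneous. As the substitution is affine, the chain rule gives $d^k/dt^k=(-2)^k\,d^k/dx^k$ with no lower-order corrections, while the polynomial coefficients transform through $t\mapsto(1-x)/2$, $1-t\mapsto(1+x)/2$, $t(1-t)\mapsto(1-x^2)/4$, and so on. Inserting these and clearing the resulting powers of $2$ produces a polynomial-coefficient operator annihilating $\rho_{(1)}^{(J)}(x;\beta)$, still carrying general $a,b$.

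Third, I would set $b=a$ and collect terms. For $\beta=2$ this gives a third-order operator with coefficients polynomial in $x$, $a$, $N$, to be matched against (\ref{3.2}); the symmetry $a=b$ is precisely what aligns the even/odd-in-$x$ structure there. For $\beta=1,4$ the same steps applied to the order-five operator, written using $\tilde a$ and $\tilde c$, yield (\ref{3.3}), whence (\ref{3.4}).

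The only genuine labour is the bookkeeping in the fifth-order case: propagating powers of $1\pm x$ and of $2$ through five derivatives, and checking that the specialisation $b=a$ collapses the general $a,b$ coefficients of \cite{RF19} to the symmetric forms displayed. This is mechanical though lengthy; conceptually the proposition is an immediate consequence of \cite{RF19} together with the change of variables $x\mapsto(1-x)/2$.
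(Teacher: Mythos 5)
Your proposal is correct and follows exactly the route the paper itself takes: Proposition \ref{P3.1} is obtained by applying the affine change of variables $x\mapsto(1-x)/2$ (with $d^k/dt^k=(-2)^k\,d^k/dx^k$ and the Jacobian constant irrelevant by homogeneity) to the differential operators of \cite[Thms.~2.1 and 2.2]{RF19} and then specialising $b=a$. The paper presents this as a read-off rather than a written-out proof, and your account supplies the same mechanical bookkeeping it implicitly relies on.
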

   
   \begin{remark}
   Analogous to the differential equations for the densities of classical $\beta$-ensembles considered in
   \cite{RF19}, the differential equations in Proposition \ref{P3.1} have a unique (up to proportionality)
   solution of the form of $w_\beta^{(J)}(x)$ times a polynomial of degree $\beta (N-1)$. The coefficients
   in the latter can be 
   determined by expanding it about infinity, substituting the assumed form in the DE, and equating appropriate
   powers.
   \end{remark}
   
According to Proposition \ref{P2.4}, the differential equation satisfied by $  \rho_{(1)}^{(Cy)}(x)$
for $\beta$ even can be obtained from the one satisfied by $ \rho_{(1)}^{(J)}(x) |_{a=b}$, provided
we set $a =  - \beta (N - 1)/2 - 1 - \alpha$ and replace $x$ by $ix$. Doing this in
Proposition \ref{P3.1} gives us a third order linear differential equation satisfied by
$ \rho_{(1)}^{(Cy)}(x) = \rho_{(1)}^{(Cy)}(x;\beta) $ in the case $\beta = 2$, and a fifth order equation for $\beta = 4$. Furthermore, Remark \ref{R2.5} tells us that the latter differential equation is valid for $\beta=1$, after reparametrising properly. 

\begin{proposition}\label{P3.3}
 Define
    \begin{multline}\label{3.5} 
   \mathcal D_{2,N}^{(Cy)} = (1 + x^2 )^3 {d^3 \over d x^3} + 8 x (1 + x^2 )^2 {d^2 \over d x^2} \\
   + 2 (1 + x^2 ) [3 + 2 N (N + 2\alpha) + (7 -2 \alpha^2) x^2] {d \over dx} 
   + 4 x ( 1 + \alpha^2 + 2 N (N + 2 \alpha) + (1 - \alpha^2) x^2)
   \end{multline}
   and for $\beta=1$ and $4$,
\begin{multline}\label{3.6} 
   \mathcal D_{\beta,N}^{(Cy)} = 4(1+x^2)^5{d^5 \over d x^5}+80x(1+x^2)^4{d^4 \over d x^4}-4(5\tilde{\alpha}-122)(1+x^2)^4{d^3 \over d x^3} \\
+4(5\tilde{N}-93)(1+x^2)^3{d^3 \over d x^3}-8(19\tilde{\alpha}-130)x(1+x^2)^3{d^2 \over d^2 x}+16(11\tilde{N}-19)x(1+x^2)^2{d^2 \over d^2 x} \\
+8(2\tilde{\alpha}^2-31\tilde{\alpha}+82)(1+x^2)^3{d \over d x}-32\left[(\tilde{\alpha}-11)(\tilde{N}-4)-31\right](1+x^2)^2{d \over d x} \\
+16\left[(\tilde{N}-8)^2-65\right](1+x^2){d \over d x}+16(\tilde{\alpha}-2)^2x(1+x^2)^2 \\
-16\left[(3\tilde{\alpha}-8)\tilde{N}+\tilde{\alpha}\right]x(1+x^2)+32(\tilde{N}-1)^2x
   \end{multline}
with
\begin{equation}\label{3.6a}
\tilde{\alpha}:={\alpha\over\beta/2-1}\left({\alpha\over\beta/2-1}-1\right),\qquad \tilde{N}:=\left(2(\beta/2-1)N+{\alpha\over\beta/2-1}\right)^2.
\end{equation}
Then for $\beta=1,2$ and $4$, we have
\begin{equation} \label{3.7} 
\mathcal D_{\beta,N}^{(Cy)}\,\rho_{(1)}^{(Cy)}(x;\beta)=0.
\end{equation}
   \end{proposition}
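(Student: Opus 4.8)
The plan is to transfer the differential equations of Proposition~\ref{P3.1} for the symmetric Jacobi density to the Cauchy density by means of the identity \eqref{2.12} of Proposition~\ref{P2.4}, exactly as indicated in the paragraph preceding the statement. Set $h(x):=\rho_{(1)}^{(J)}(ix;\beta)\big|_{a=b=-\beta(N-1)/2-1-\alpha}$. Since the symmetric Jacobi density is an even function, replacing $x$ by $ix$ in \eqref{2.12} gives $\rho_{(1)}^{(Cy)}(x;\beta)=-\cot(\pi\alpha)\,h(x)$, valid for $\beta=2,4$ by Proposition~\ref{P2.4} and for $\beta=1$ by Remark~\ref{R2.5}. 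Because $\mathcal D_{\beta,N}^{(Cy)}$ is a linear homogeneous operator and $-\cot(\pi\alpha)$ is constant in $x$, it suffices to show that $h$ satisfies \eqref{3.7}; equivalently, that the operator obtained from $\mathcal D_{\beta,N}^{(J)}$ in \eqref{3.4} by the substitution $x\mapsto ix$ and $a=b=-\beta(N-1)/2-1-\alpha$ is, up to a nonzero scalar, equal to $\mathcal D_{\beta,N}^{(Cy)}$ under the reparametrisation \eqref{3.6a}.

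Carrying out the substitution is routine. If $g(x)=f(ix)$ then $g^{(k)}(x)=i^{k}f^{(k)}(ix)$, so a term $P_k(x)\,\tfrac{d^k}{dx^k}$ of $\mathcal D_{\beta,N}^{(J)}$, once the underlying Jacobi equation is evaluated at the argument $ix$, contributes $P_k(ix)\,i^{-k}\,\tfrac{d^k}{dx^k}$ to the equation for $h$; in particular each factor $(1-x^2)$ becomes $(1+x^2)$. In \eqref{3.2}, and likewise in \eqref{3.3}, the polynomial $P_k$ has parity opposite to that of $k$ in every term, so every coefficient of the resulting operator is $i$ times a polynomial (rational in the parameters) with no overall power of $i$ left; dividing the equation through by this common factor of $i$ produces the candidate for $\mathcal D_{\beta,N}^{(Cy)}$. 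For $\beta=2$ this means $a=b=-N-\alpha$, so $a+N=-\alpha$ and $-2N^2-4aN=2N(N+2\alpha)$, whereupon \eqref{3.2} turns into \eqref{3.5} term by term, recovering the third-order equation of \cite{ABGS20}. For $\beta=1,4$ one keeps $\beta/2-1$ symbolic, re-expresses the coefficients $\tilde a,\tilde c$ of Proposition~\ref{P3.1} in terms of $\alpha$ and $N$ via $a=b=-\beta(N-1)/2-1-\alpha$, and checks that after the above reductions the operator coincides with \eqref{3.6} with $\tilde\alpha,\tilde N$ as defined in \eqref{3.6a}.

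The fifth-order case is where the main obstacle lies: it is a single lengthy polynomial identity, most safely confirmed with computer algebra, and one must also verify that the combined operator \eqref{3.3} specialises correctly at both $\beta=4$ (where $\beta/2-1=1$) and $\beta=1$ (where $\beta/2-1=-1/2$) --- the latter being legitimate because \eqref{3.3} already encodes the $\beta\mapsto 4/\beta$ duality of \cite{RF19} while \eqref{2.12} extends to $\beta=1$ by Remark~\ref{R2.5}. Since $\rho_{(1)}^{(Cy)}(x;\beta)$ and the coefficients of $\mathcal D_{\beta,N}^{(Cy)}$ depend analytically on $\alpha$, it is enough to establish \eqref{3.7} for generic real $\alpha$, where $\cot\pi\alpha$ is finite and nonzero, the general case following by continuity; as an independent check, \eqref{3.7} can be tested against the representation of $\rho_{(1)}^{(Cy)}$ in terms of Jacobi polynomials noted after Proposition~\ref{P2.9}.
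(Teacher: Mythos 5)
Your proposal is correct and follows exactly the route the paper takes: substitute $a=b=-\beta(N-1)/2-1-\alpha$ and $x\mapsto ix$ into the Jacobi differential equations of Proposition~\ref{P3.1}, transfer the result to $\rho_{(1)}^{(Cy)}$ via the identity \eqref{2.12} (extended to $\beta=1$ by Remark~\ref{R2.5}), and verify the resulting polynomial identities --- by hand for $\beta=2$ and by computer algebra for the fifth-order case. The details you supply (the chain-rule bookkeeping of the powers of $i$, the parity observation that leaves a single common factor of $i$ to divide out, and the reparametrisation to $\tilde\alpha,\tilde N$) are exactly the computations the paper leaves implicit.
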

   
\subsubsection{Relation to Painlev\'e equations in the symmetric case}
   Let $\{ \rho_{(k)}^{(Cy)}\}$ denote the $k$-point correlations for the Cauchy ensemble with
   weight (\ref{2.2}). It is well known (see e.g.~\cite[Ch.~9]{Fo10}) that the generating function
   $E_N^{(Cy)}((s,\infty);\xi)$ for the probabilities $\{ E_N^{(Cy)}(k;(s,\infty)) \}_{k=0}^N$ of there being
   exactly $k$ eigenvalues in the interval $(s,\infty)$ can be written in terms of the correlations
   according to
     \begin{equation}\label{E.1}
     E_N^{(Cy)}((s,\infty);\xi) = 1 + \sum_{k=1}^N { (-\xi)^k \over k!}
     \int_s^\infty dx_1 \cdots  \int_s^\infty dx_k \,  \rho_{(k)}^{(Cy)}(x_1,\dots,x_k).
     \end{equation}
     In the case $\beta = 2$, it is known \cite{WF00} that
  \begin{equation}\label{E.2}     
  \sigma(s) := (1 + s^2) {d \over d s} \log E_N^{(Cy)}((s,\infty);\xi)
  \end{equation}
  satisfies the nonlinear equation (which can be identified in terms of the
  $\sigma$-P${}_{VI}$ equation \cite{FW04}; see also (\ref{PVI}) below)
  \begin{multline}\label{E.3}
  (1+s^2)^2 (\sigma'')^2 + 4 (1 + s^2) (\sigma')^3 - 8 s \sigma (\sigma')^2 \\
  + 4 \sigma^2 (\sigma' - \alpha^2) + 8 \alpha^2 s \sigma \sigma' +
  4 [ N (N + 2 \alpha) - \alpha^2 s^2] (\sigma')^2 = 0.
    \end{multline}  
Note that (\ref{E.3}) is independent of the parameter $\xi$ in (\ref{E.1}).

According to (\ref{E.1}), to leading order in $\xi$,
  \begin{equation}\label{E.4} 
  \sigma(s) = \xi  r(s)  + O(\xi^2), \quad r(s) :=  (1 + s^2)\rho_{(1)}^{(Cy)}(s).
   \end{equation}
Substituting in (\ref{E.3}) and equating terms to the leading order in $\xi$
(which is $O(\xi^2)$) shows
  \begin{equation}\label{E.5} 
 (1 + s^2 )^2 (r''(s))^2 - 4 \alpha^2 ( r(s))^2 + 8 \alpha^2 s r(s) r'(s) +
 4 [ N (N + 2 \alpha) - \alpha^2 s^2] ( r'(s))^2 = 0.
  \end{equation}
  Upon differentiating with respect to $s$, a factor of $r''(s)$ can be
  cancelled and a third order  linear differential equation results,
   \begin{equation}\label{E.6} 
   (1 + s^2)^2 r'''(s) + 2 s (1 + s^2) r''(s) + 4 [ N (N + 2 \alpha) - \alpha^2 s^2] r'(s)
   + 4 \alpha^2 s r(s) = 0.
  \end{equation}
  Recalling the definition of $r(s)$ in terms of $ \rho_{(1)}^{(Cy)}(s)$, we see
  that (\ref{E.6}) is equivalent to the third order differential equation given in Proposition
 \ref{P3.3}.
 
 The use of the characterisation of gap probabilities in terms of $\sigma$-Painlev\'e
 transcendents to derive third order differential equations for the densities of
 other classical ensembles at $\beta = 2$ can be found in \cite{FT18, RF19}.
 
 \begin{remark}
 The $\beta=1,4$ analogue of \eqref{E.6}, obtained by taking $\rho_{(1)}^{(Cy)}(s)=r(s)/(1+s^2)$ in \eqref{3.6}, is
\begin{multline}\label{E.6b}
(1+s^2)^4r^{(5)}(s)+10s(1+s^2)^3r^{(4)}(s)-(5\tilde{\alpha}-22)(1+s^2)^3r'''(s) \\
+(5\tilde{N}-13)(1+s^2)^2r'''(s)-8(\tilde{\alpha}-1)s(1+s^2)^2r''(s) \\
+2(7\tilde{N}+1)s(1+s^2)r''(s)+4\tilde{\alpha}^2(1+s^2)^2r'(s)-2\left[(4\tilde{\alpha}-1)\tilde{N}+1\right](1+s^2)r'(s) \\
+4(\tilde{N}-1)^2r'(s)-4\tilde{\alpha}^2s(1+s^2)r(s)+4\tilde{\alpha}(\tilde{N}-1)sr(s)=0,
\end{multline}
where $\tilde{\alpha}$ and $\tilde{N}$ are as given in Proposition \ref{P3.3}. In comparing \eqref{E.6} and \eqref{E.6b} to
their counterparts in Proposition \ref{P3.3}, we see that the degree of each of the coefficients (which alternate between
being even in $s$, and being odd in $s$) has been reduced by two.
\end{remark}

\subsubsection{The spectrum singularity scaling regime}
It is well known that under the stereographic transformation $s = \cot \theta/2$, the Cauchy ensemble maps to
the circular Jacobi ensemble with weight
\begin{equation}\label{9.2}
|1 - e^{i \theta} |^{2\alpha},\quad\theta\in[0,2\pi)
\end{equation}
and thus eigenvalue PDF proportional to
\begin{equation}
\prod_{l=1}^N|1 - e^{i \theta_l} |^{2\alpha} \prod_{1 \le j < k \le N} |e^{i \theta_k} - e^{i \theta_j}|^\beta;
\end{equation}
see e.g.~\cite[\S 2.5]{Fo10}, \cite{Li17}. In particular, the densities in the two
ensembles are related by 
\begin{equation}\label{9.3}
\rho_{(1)}^{(Cy)}(s) = 2\sin^2 {\theta\over2}\,\rho_{(1)}^{(cJ)}(\theta).
\end{equation}
Equivalently, in the notation $r(s)$ of (\ref{E.4}),
\begin{equation}\label{9.3a}
r(s) = 2  \rho_{(1)}^{(cJ)}(\theta).
\end{equation}
In the case $\alpha=0$, the weight (\ref{9.2}) is a constant and so according to (\ref{9.3a}), $r(s)$
is then also a constant. We can see immediately that this is consistent with equations (\ref{E.6}) and (\ref{E.6b}).

Suppose now $\theta$ is scaled by writing $\theta = 2 X/N$. The circular Jacobi ensemble then admits a
well defined scaling limit as $N \to \infty$, giving rise to what is termed a spectrum singularity at $\theta=0$
\cite[\S 3.9]{Fo10}. In view of (\ref{9.3a}), in the cases $\beta = 1,2$ and $4$, differential equations for the corresponding density
$\rho_{(1)}^{(s.s.)}(X)$ can be obtained by setting 
\begin{equation}\label{9.3x}
s = N/X
\end{equation}
 and $r(s) = \rho_{(1)}^{(s.s.)}(X)$
 in (\ref{E.6}) and (\ref{E.6b}), and then equating terms at leading order in $N$.
 Specifically for $\beta = 2$, we therefore have that $R(X) = \rho_{(1)}^{(s.s.)}(X)$
satisfies the third order linear differential equation
\begin{equation}\label{9.3b}
X^2  R'''(X) + 4 X R''(X) + (2 - 4 \alpha^2 + 4 X^2) R'(X) - {4 \alpha^2 \over X} R(X) = 0.
\end{equation}
This is consistent with the known exact formula \cite{NS93}, \cite[Eq.(7.49) with $\pi \rho = 1$]{Fo10}
\begin{equation}\label{9.3c}
 \rho_{(1)}^{(s.s.)}(x) = {x \over 2} \Big ( ( J_{\alpha - 1/2}(x) )^2 + ( J_{\alpha + 1/2}(x) )^2 - {2 \alpha \over x}
 J_{\alpha - 1/2}(x)   J_{\alpha + 1/2}(x)  \Big ),
 \end{equation}
 as can be checked using computer algebra.

\subsubsection{The global scaling of the symmetric Cauchy density} \label{s3.1.3}
The scaling (\ref{9.3x}) corresponds to a spacing of order unity for eigenvalues in the neighbourhood of the
 spectrum singularity. This is in contrast to the global scaling of the
 density,
\begin{equation} \label{3.23}
\rho_{(1)}^{(glob.\,Cy)}(s):=\lim_{N \to \infty} {1 \over N}\rho_{(1)}^{(Cy)}(s) \Big |_{\alpha = \hat{\alpha} N},\quad\hat{\alpha}\textrm{ constant in }N,
\end{equation}
where the spacing between eigenvalues is of order $1/N$. Upon replacing $\alpha$ by $\hat{\alpha}N$ in (\ref{E.6}), equating the leading order
 terms in $N$ gives
 \begin{equation}\label{9.3y}
 ((1+2 \hat{\alpha}) - \hat{\alpha}^2 s^2) r'(s) +  \hat{\alpha}^2 s r(s) = 0.
  \end{equation}
  Solving this first order equation, and making use of the definition of $r(s)$ from (\ref{E.4}), we conclude for $\beta=2$ that
\begin{equation}\label{9.3z}  
\rho_{(1)}^{(glob.\,Cy)}(s) = 
A { \sqrt{1 + 2 \hat{\alpha} - \hat{\alpha}^2 s^2} \over 1 + s^2} 
\chi_{|s| < \sqrt{1 + 2 \hat{\alpha}}/ \hat{\alpha}}, \quad A = {1 \over \pi },
  \end{equation}
  where the scalar $A$ has been determined by the requirement that the RHS integrate to unity. Equation \eqref{9.3z} is expected to hold universally for $\beta>0$, as the limiting densities of the other classical ensembles are known to be independent of $\beta$. This is in fact known from the equivalence \eqref{9.3} of the Cauchy ensemble to the circular Jacobi ensemble upon the stereographic transformation $s=\cot\theta/2$. Thus, it has been established \cite{BNR09,JJK10} that for all $\beta>0$, the limiting equilibrium density for the circular Jacobi ensemble with $\alpha=\hat{\alpha}\beta N/2$ is given by the stereographic transformation of \eqref{9.3z},
\begin{equation} \label{3.25a}
\rho_{(1)}^{(glob.\,cJ)}(\theta)={\hat{\alpha}\over 2\pi}\sqrt{\cot^2{\theta_c\over 2}-\cot^2{\theta\over2}}\,\chi_{\theta_c<\theta<2\pi-\theta_c}
\end{equation}
where
\begin{equation*}
\theta_c=\min_{\theta\in[0,2\pi)}\{\theta\,:\,\cot^2\theta/2=(1+2\hat{\alpha})/\hat{\alpha}^2\}.
\end{equation*}
 
   \subsection{The non-symmetric case Im$\, \alpha \ne 0$}
   In the symmetric case, differential equations for $\beta = 1,2$ and
   $4$ were considered. This is again possible in the non-symmetric case; however,
   the complexity of the case $\beta = 4$ (equivalently $\beta=1$) increases to the extent that it becomes cumbersome
   to present (for the Jacobi ensemble defined on $(0,1)$ its explicit form is given in
   \cite[Th.~2.2]{RF19}), so we will restrict attention to the case $\beta = 2$.
   The first task is to change variables $x \mapsto (1 - x)/2$
   in the known third order linear differential equation
   for the density in the Jacobi ensemble on $(0,1)$ with $\beta = 2$ \cite[Th.~2.1]{RF19}.
   Unlike the symmetric case considered above, we no longer set $a=b$, so this does not change the differential equation of \cite{RF19} in any
   essential way, and thus the result of this calculation will not be recorded here. However, its
   explicit form is required in what comes next.  Specifically, starting from the explicit form and
   setting $a = - N - \bar{\alpha}$, $b = - N - \alpha$, and replacing $x$
   by $ix$, it follows from (\ref{2.12a}) that we obtain the differential equation satisfied by
   $\rho_{(1)}^{(Cy)}(x)$.
   
 \begin{proposition}\label{P3.6}  
 Define the third order differential operator
 \begin{multline}\label{3.15}
 \tilde{\mathcal{D}}_{2,N}^{(Cy)} = (1 + x^2)^3 {d^3 \over d x^3} + 8 x ( 1 + x^2)^2 {d^2 \over d x^2} \\
 + (1 + x^2)[6 + 4N (N + \alpha + \bar{\alpha}) + (\alpha-\bar{\alpha})^2  + 2 i (\alpha-\bar{\alpha}) (2 N + 
 \alpha + \bar{\alpha}) x + (14 - (\alpha + \bar{\alpha})^2) x^2 ] {d \over d x} \\
 + [4 + 8 N(N+\alpha+\bar{\alpha}) + 3 \alpha^2 - 2 \alpha \bar{\alpha} + 3 \bar{\alpha}^2 +(4-(\alpha+\bar{\alpha})^2)x^2]x \\
+ i (\alpha-\bar{\alpha}) ( 2N + \alpha + \bar{\alpha} )(3x^2-1).
 \end{multline}
 For $\beta = 2$, we have
  \begin{equation}\label{E.6a} 
   \tilde{\mathcal{D}}_{2,N}^{(Cy)} \, \rho_{(1)}^{(Cy)}(x) = 0.
     \end{equation}
 \end{proposition}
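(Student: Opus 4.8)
The plan is to follow the route indicated in the paragraph preceding the proposition: transport the known third order linear differential equation for the density of the $\beta=2$ Jacobi ensemble on $(0,1)$ with weight \eqref{3.1}, recorded in \cite[Th.~2.1]{RF19}, into a statement about $\rho_{(1)}^{(Cy)}$ by a chain of elementary substitutions, using the $k=1$ case of the correlation identity \eqref{2.12a} from Proposition \ref{P2.9} as the bridge.

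First I would apply the affine change of variables $x \mapsto (1-x)/2$ in the operator of \cite[Th.~2.1]{RF19}. Being linear, this map only rescales $d/dx$ by a constant and multiplies the density by a constant Jacobian, so, the equation being homogeneous, it yields a third order operator $\mathcal{L}_{a,b}$ with polynomial coefficients in $x$ (rational in $a,b,N$) annihilating $\rho_{(1)}^{(J)}(x)$ for the weight $(1-x)^a(1+x)^b$ on $(-1,1)$ at $\beta=2$, now with no symmetry assumption on $a,b$; its leading coefficient is $(1-x^2)^3$, and imposing $a=b$ must return $\mathcal{D}_{2,N}^{(J)}$ of \eqref{3.2}, a useful check. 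Next I would specialise to $a = -N-\bar\alpha$, $b = -N-\alpha$, which for $\beta=2$ is precisely the choice $\bar a = b = -\beta(N-1)/2-1-\alpha$ appearing in \eqref{2.12a}; by that identity (with $k=1$) the function $x \mapsto \rho_{(1)}^{(Cy)}(ix)$ equals a nonzero $x$-independent constant times this specialised Jacobi density, hence is also annihilated by $\mathcal{L}_{a,b}$. Finally I would pass to the variable $u = ix$: in the equation this means replacing every explicit $x$ by $-iu$ and $d/dx$ by $i\,d/du$, and then multiplying through by $i$ to cancel the factor $i^3=-i$ produced in the leading term, which turns $(1-x^2)^3$ into $(1+u^2)^3$ and leaves a third order operator in $u$ annihilating $\rho_{(1)}^{(Cy)}(u)$. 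Relabelling $u$ as $x$ and comparing with \eqref{3.15} finishes the proof; the genuinely imaginary coefficient $i(\alpha-\bar\alpha)(3x^2-1)$ in \eqref{3.15} should emerge from exactly the parts of $\mathcal{L}_{a,b}$ that are odd in $x$ and vanish when $\alpha=\bar\alpha$, and the whole operator should reduce, for $\alpha$ real, to $\mathcal{D}_{2,N}^{(Cy)}$ of Proposition \ref{P3.3}.

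The main obstacle is purely computational: carrying the moderately bulky $\beta=2$ operator of \cite[Th.~2.1]{RF19} correctly through the affine map, the parameter substitution, and the dilation $x\mapsto ix$, and reorganising the resulting polynomial coefficients into the even/odd-in-$x$ presentation of \eqref{3.15} while keeping exact track of the powers of $i$. Since \eqref{2.12a} is an identity between explicit functions (the densities being, at $\beta=2$, the respective weight times an explicit polynomial), no analytic-continuation issue arises beyond that already built into Proposition \ref{P2.9}. As an independent check, one may instead linearise the $\sigma$-Painlev\'e~VI characterisation of the gap-probability generating function for the non-symmetric $\beta=2$ Cauchy ensemble to leading order in the counting variable, exactly as in the passage from \eqref{E.3} to \eqref{E.6}, and confirm that the third order equation obtained for $(1+s^2)\rho_{(1)}^{(Cy)}(s)$ agrees with \eqref{3.15} after dividing by $(1+s^2)$.
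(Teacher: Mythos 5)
Your proposal is correct and follows essentially the same route the paper itself takes: transform the $\beta=2$ Jacobi equation of \cite[Th.~2.1]{RF19} by $x\mapsto(1-x)/2$, specialise $a=-N-\bar\alpha$, $b=-N-\alpha$, and substitute $x\mapsto ix$ via the $k=1$ case of Proposition \ref{P2.9}, exactly as outlined in the paragraph preceding the proposition. Your suggested cross-check via the $\sigma$-P$_{VI}$ linearisation is also the one the paper carries out immediately afterwards in arriving at \eqref{3.27}.
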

 
\subsubsection{Relation to Painlev\'e equations in the non-symmetric case}
After a simple change of variables, the Jimbo-Miwa-Okamoto $\sigma$-form of the
 Painlev\'e differential equation reads \cite[Eq.~(1.32)]{FW04}
  \begin{equation}\label{PVI}
  h'\Big ( (1 + t^2)h'' \Big )^2 + 4 \Big ( h' ( h - t h') - i
  b_1b_2b_3b_4 \Big )^2 + 4 \prod_{k=1}^4 (h' + b_k^2) = 0.
 \end{equation}
 Let $\mathbf b = (b_1, b_2, b_3, b_4)$ and define $e_2'[\mathbf b]$, $e_2[\mathbf b]$ as the
 elementary symmetric polynomials of degree two in $\{b_1, b_3, b_4 \}$ and
 $\{b_1, b_2,b_3, b_4 \}$, respectively.
 Set 
   \begin{equation}\label{PVI.1}
   U_N^{(Cy)}(t;(\alpha_1,\alpha_2);\xi) =
   (t^2 + 1) {d \over d t} \log \Big (
   (it - 1)^{e_2'[\mathbf b] - e_2[\mathbf b]/2} (it + 1)^{e_2[\mathbf b]/2}
   E_N^{(Cy)}((t,\infty);\xi)) \Big ),
 \end{equation}
 where $E_N^{(Cy)}$ is specified by (\ref{E.1}) in the non-symmetric case with
 $\alpha = \alpha_1 + i \alpha_2$ ($\alpha_2 \ne 0$). We have from \cite[Prop.~15]{FW04} that
 $U_N^{(Cy)}$ satisfies the transformed $\sigma$-P${}_{VI}$ equation (\ref{PVI}) with parameters
  \begin{equation}\label{PVI.2} 
 \mathbf b = ( - \alpha_1, -i \alpha_2, N + \alpha_1, \alpha_1).
  \end{equation}
 Analogous to (\ref{E.4}), we have from (\ref{E.1}), (\ref{PVI.1}) and (\ref{PVI.2}) that
 \begin{equation}\label{PVI.3}  
  U_N^{(Cy)}(t;(\alpha_1,\alpha_2);\xi) = (N + \alpha_1) \alpha_2 - \alpha_1^2 t + \xi r(t) + O(\xi^2), \qquad
  r(t) := (1 + t^2)   \rho_{(1)}^{(Cy)}(t).
  \end{equation}
  
  Substituting (\ref{PVI.3}) in (\ref{PVI}) and equating terms of leading order in $\xi$, which
  as for   (\ref{E.4})  occurs at order $\xi^2$, then differentiating and cancelling a factor
  of $r''$ shows
  \begin{multline}\label{3.27}
   (1 + t^2)^2 r''' +  2   t (1 + t^2) r''  + 4[ ( \alpha_1^2 t+(N + \alpha_1) \alpha_2 ) (r - t r') \\
 +  ((N + \alpha_1)^2 -  (N + \alpha_1) \alpha_2 t - \alpha_1^2 - \alpha_2^2 ) r']  = 0.
  \end{multline}
  In the case $\alpha_2 = 0$, this agrees with (\ref{E.6}).
  Now, substituting for $r(t)$ in terms of $ \rho_{(1)}^{(Cy)}(t)$ as specified in (\ref{PVI.3}) reclaims
  (\ref{3.15}) and (\ref{E.6a}).

\subsubsection{The global scaling of the non-symmetric Cauchy density}
The global scaling of the density $\rho_{(1)}^{(glob.\,Cy)}(s)$ specified by \eqref{3.23} with $\hat{\alpha}=\hat{\alpha}_1+i\hat{\alpha}_2$, generalising the working of \S\ref{s3.1.3}, is deduced from
  (\ref{3.27}) by setting $\alpha_1 = \hat{\alpha}_1N$, $\alpha_2 =\hat{\alpha}_2N$, equating terms of
  leading order in $N$, solving the resulting first order differential equation, and finally substituting
  for $r(t)$ in terms of $ \rho_{(1)}^{(Cy)}(t)$ as specified in (\ref{PVI.3}). The final result is
\begin{equation}\label{9.3zv}  
 \rho_{(1)}^{(glob.\,Cy)}(s) = 
A { \sqrt{(u_+-s)(s-u_-)} \over 1 + s^2} 
\chi_{u_- < s < u_+} ,
  \end{equation}
  where
\begin{equation}\label{9.3zw}   
u_\pm = {-(1 +  \hat{\alpha}_1)  \hat{\alpha}_2 \pm \sqrt{( \hat{\alpha}_1^2+ \hat{\alpha}_2^2)(1 + 2  \hat{\alpha}_1)} \over  \hat{\alpha}_1^2}, \quad  A = {\hat{\alpha}_1 \over \pi}.
  \end{equation}
  As expected, setting $\hat{\alpha}_2=0$ in \eqref{9.3zv} reclaims \eqref{9.3z}. Also, as with \eqref{9.3z}, the transformation of \eqref{9.3zv} to the unit circle is known in the context of the study of the circular Jacobi ensemble \cite[Eq.~(5.6)]{BNR09}.
  
  \section{Moments}\label{S4}
  \subsection{Moment recurrences for the symmetric cases}
  Restricting attention at first to non-negative integer moments, in the symmetric Jacobi ensemble, only the even moments are non-zero.
  The change of variables $x \mapsto 1 - 2 x $ shows that in this setting,
   \begin{equation}\label{S.1}
   \Big \langle \sum_{l=1}^N x_l^{2k} \Big \rangle_{(-1,1)}^{(J)} =
   \Big \langle \sum_{l=1}^N (1 - 2 x_l )^{2k} \Big \rangle_{(0,1)}^{(J)} =
   \sum_{s=0}^{2k} \binom{2k}{s}  (-2)^{s}      \Big \langle \sum_{l=1}^N x_l^{s} \Big \rangle_{(0,1)}^{(J)} ,
   \end{equation}
   where in the second and third expression, the Jacobi ensemble is specified by the weight
   (\ref{3.1}) supported on $(0,1)$ (as indicated by the notation) and with $a=b$; the second
   equality follows from the binomial expansion. The averages on the RHS for $s=0$ and
   $s=1$ are immediate,
    \begin{equation}\label{T.2}  
     \Big \langle \sum_{l=1}^N x_l^s \Big \rangle_{(0,1)}^{(J)} \Big |_{s=0} = N, \qquad
      \Big \langle \sum_{l=1}^N x_l^s \Big \rangle_{(0,1)}^{(J)} \Big |_{s=1} = {N \over 2},
     \end{equation}   
       as follows from the normalisation and symmetry of the distribution about $x_l=1/2$, respectively.
   For $s \ge 2$
   the averages on the RHS of (\ref{S.1})
   are known from Jack polynomial theory; see \cite[Eq.~(53)]{MRW15} and
   \cite[Eqns.~(33)-(35)]{FLD16}. In particular, we read off from Eq.~(189) of \cite{FLD16} (must multiply this equation by $N$ and replace $\beta^2$ by $-\beta/2$ for the present setting) and Eq.~(B.7b) of \cite{MRW15} that
   \begin{multline}\label{S.2}  
    \Big \langle \sum_{l=1}^N x_l^{2} \Big \rangle_{(0,1)}^{(J)}  = N \\
    \times {2 a^2 + a [- (\beta/2) ( 6 - 5 N) + 6] + (\beta/2)^2 (N-1)(3N-4)  -(\beta/2) (9-7N) + 4 \over
    2 (2a - (\beta/2) (3 - 2N) + 2) ( 2 a +  \beta (N - 1) + 3 ) }.
    \end{multline}

    For integer values of $k$, the Jack polynomial formula leading to (\ref{S.2}) reveals that 
    the $2k$-th moment is a rational function in $N$. The recent work \cite{CMOS18} has drawn
    attention to analogous features of the moments as a function of $k$ in the complex plane
    for the Gaussian,  Laguerre and non-symmetric Jacobi (the
    latter defined on $(0,1)$)
    classical ensembles. Moreover, for $\beta = 2$, it was found that this function of $k$
    can be factored in terms of some gamma functions times a particular
    hypergeometric polynomial of degree $N-1$ from the Askey table. In a subsequent work
     \cite{ABGS20}, these considerations were extended to the symmetric Cauchy ensemble with
     $\beta = 2$, which involved the continuous Hahn polynomials.
     Taking the viewpoint of Section \ref{S2} that integrations over the Cauchy
     ensemble can be obtained as corollaries of integrations over the Jacobi ensemble defined on
     $(-1,1)$, our aim in this section is to give a self-contained derivation of results of this type for the  symmetric
     Jacobi ensemble with $\beta = 2$. The essential idea is that the differential equations of Section \ref{S3} lead to recurrences for the moments.
     
     First, note from (\ref{1.9}) that for a general $\beta$-ensemble with $\beta$ even, we have
      \begin{equation}\label{S.2x}   
   \rho_{(1)}(x) = w_\beta(x) q_{\beta ( N - 1)/2}(x^2)
 \end{equation}
 for some polynomial $ q_{\beta ( N - 1)/2}(y)$ of degree $\beta (N - 1)/2$. 
 Specialising now to the symmetric Jacobi weight (\ref{2.4}) with $a=b$, and
 making use of the integration formula (\ref{6.2b}), it then follows that
   \begin{equation}\label{S.2y}
   \int_{-1}^{1} |x|^{2k}   \rho_{(1)}^{(J)}(x) \, dx = {\Gamma(k+1/2) \over \Gamma(k + \beta(N-1)/2+a + 3/2)} Q_{\beta (N - 1)/2}(k)
   \end{equation} 
   for some polynomial $Q_{\beta (N - 1)/2}(y)$ of degree $\beta (N - 1)/2$. 
    
 It turns out that the recurrences for the symmetric Jacobi ensemble's moments simplify if written in terms of the differences of successive even moments
     \begin{equation}\label{S.2a} 
 \mu_k^{(J)} = m_{2k+2}^{(J)} - m_{2k}^{(J)}, \qquad m_{2k}^{(J)}:=   \int_{-1}^1 |x|^{2k} \rho_{(1)}^{(J)}(x) |_{a=b} \, dx.
  \end{equation}
 Note that as a function of $k$, this is well defined for ${\rm Re} \, k > - 1/2$. The equivalent quantities in the symmetric Cauchy case are the sums of successive even moments
\begin{equation}\label{4.7}
\mu_k^{(Cy)}=m_{2k+2}^{(Cy)}+m_{2k}^{(Cy)}, \qquad m_{2k}^{(Cy)}:=\int_{-\infty}^{\infty}|x|^{2k}\rho_{(1)}^{(Cy)}(x)\,dx.
\end{equation}
For $\mu_k^{(Cy)}$ to be well defined, we require $-1/2 < {\rm Re} \, k < \alpha - 1/2$, the upper bound now necessary due to the domain of integration being non-compact. The respective differences and sums are equivalent in the sense that, via analytic continuation,
\begin{equation} \label{4.8}
\mu_k^{(J)}\Big |_{a \mapsto - \beta(N-1)/2 - 1 - \alpha} = (-1)^{k-1} \mu_k^{(Cy)},
\end{equation}
as is consistent with Proposition \ref{P2.1}.
    
 In relation to the difference (\ref{S.2a}), the natural quantity is
 \begin{equation}\label{S.3}   
 r^{(J)}(x) = (1 - x^2)    \rho_{(1)}^{(J)}(x) |_{a=b}.
  \end{equation}
   It follows from Proposition \ref{P3.1} that for $\beta = 2$, $ r^{(J)}$ satisfies
   the differential equation
 \begin{equation}\label{E.6x} 
   (1 - x^2)^2 r'''(x) - 2 x (1 - x^2) r''(x) + 4 [ N (N + 2 a) - (N + a)^2 x^2] r'(x)
   + 4  (a + N)^2 x r(x) = 0,
  \end{equation}
while for $\beta=1$ and $4$, $r^{(J)}$ instead satisfies
\begin{multline} \label{4.11}
4(1-x^2)^4r^{(5)}(x)-40x(1-x^2)^3r^{(4)}+(5\tilde{c}^2-93)(1-x^2)^3r'''(x)
\\-4(5\tilde{a}-8)(1-x^2)^2r'''(x)-8(\tilde{c}^2+5)x(1-x^2)^2r''(x)+8(7\tilde{a}+18-10x^2)x(1-x^2)r''(x)
\\+(\tilde{c}^2-1)^2(1-x^2)^2r'(x)-8\left[\tilde{c}^2(\tilde{a}+1)-2\tilde{a}-1\right](1-x^2)r'(x)
\\+16\tilde{a}^2r'(x)+(\tilde{c}^2-1)\left[(\tilde{c}^2-1)(1-x^2)-4\tilde{a}\right]xr(x)=0,
\end{multline}
with $\tilde{a}$ and $\tilde{c}$ as given in Proposition \ref{P3.1}. These differential equations can formally be obtained from (\ref{E.6}) and \eqref{E.6b} by the mappings $x \mapsto i x$,
  $\alpha \mapsto - \beta(N-1)/2-1 - a$, or equivalently, one can simply take $r^{(J)}(x)=\rho_{(1)}^{(J)}(x)/(1-x^2)$ in equations \eqref{3.2}--\eqref{3.4}.
  Note the simplification relative to (\ref{3.2}) and \eqref{3.3}.
  From these differential equations, second ($\beta=2$) and fourth ($\beta=1,4$) order recurrences can be
  derived for $\{ \mu_{k+j}^{(J)} \}_{j \in \mathbb Z}$. The method,
  which is based on integration by parts, requires that the symmetric Jacobi parameter
  $a$ be greater than zero, although this condition is not necessary, and the recurrence
  correctly specifies the sequence for all values that it is well defined.

 \begin{proposition} \label{P4.1}
 Define $ \mu_k^{(J)}$ by (\ref{S.2a}) and recall the definitions of $\tilde{a}$ and $\tilde{c}$ given in Proposition \ref{P3.1}. For $\beta = 2$, the sequence  $\{ \mu_{k+j}^{(J)} \}_{j \in \mathbb Z}$
 satisfies the recurrence
  \begin{multline}\label{E.7a} 
 (2 k + 4)[  (2 k + 3)^2 - 4 (a + N)^2  ] \mu_{k+1} - 2  (2 k + 1) 
 [  ( 2k+2)^2  - 2 N ( N + 2a)] \mu_k \\
 + (2k + 1 ) (2k) (2 k - 1) \mu_{k-1} = 0.
 \end{multline}
The initial condition determining the sequence $\{ \mu_k^{(J)} |_{\beta = 2} \}_{k=0}^\infty$ is
\begin{equation} \label{4.13}
\mu_0^{(J)} = {2 N (a + N) (2 a + N)  \over 1 - 4 (a + N)^2}.
\end{equation}
For $\beta=1$ and $4$, the sequence $\{ \mu_{k+j}^{(J)} \}_{j \in \mathbb Z}$
 satisfies the recurrence
\begin{equation} \label{4.14}
\sum_{l=-2}^2f_l\,\mu_{k+l}^{(J)}=0,
\end{equation}
where
\begin{align*}
f_{-2}&:=-4(2k+1)(2k)(2k-1)(2k-2)(2k-3),\\
f_{-1}&:=(2k+1)(2k)(2k-1)\left[20\tilde{a}-5\tilde{c}^2+16k(4k+5)+77\right],\\
f_0&:=(2k+1)\Big[ 8\tilde{a}(5k+8)(2k+3)-\left(\tilde{c}^2-4\tilde{a}-30k^2-67k-42\right)^2 \\
&\qquad+516k^4+2292k^3+3653k^2+2534k+673\Big], \displaybreak
\\f_1&:=\tilde{c}^2\left[(\tilde{c}^2-4\tilde{a})(4k+7)-120k^3-656k^2-1210k-746\right] \\
&\quad+\tilde{a}\left[160k^3+736k^2+1128k+580\right] \\
&\quad+512k^5+4480k^4+16056k^3+29360k^2+27270k+10243, \\
f_2&:=-2(k+3)(\tilde{c}+4k+11)(\tilde{c}+2k+4)(\tilde{c}-2k-4)(\tilde{c}-4k-11).
\end{align*}
The initial conditions determining the sequences $\{\mu_k^{(J)} |_{\beta =1,4} \}_{k=0}^{\infty}$ are
\begin{align}
\mu_0^{(J)}&=\frac{(\tilde{c}-1)\left(\tilde{c}+\tfrac{2a}{\beta/2-1}-3\right)\left(\tilde{c}-\tfrac{2a}{\beta/2-1}+1\right)}{8\tilde{c}(\tilde{c}-3)(1-\beta/2)},  \label{4.15}
\\ \mu_1^{(J)}&=\frac{(\tilde{c}^2-5)(\tilde{c}-7)-4\tilde{a}(\tilde{c}-1)}{4(\tilde{c}^2-4)(\tilde{c}-7)}\mu_0^{(J)}. \label{4.16}
\end{align}
 \end{proposition}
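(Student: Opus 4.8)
The plan is to convert each of the differential equations of Section~\ref{S3} for $r^{(J)}(x) = (1-x^2)\,\rho_{(1)}^{(J)}(x)|_{a=b}$ into a linear recurrence for the differences $\mu_k^{(J)}$, and then to fix the initial data from the known low-order moments. The starting observation is the identity
\[
\int_{-1}^1 x^{2k}\,r^{(J)}(x)\,dx = m_{2k}^{(J)} - m_{2k+2}^{(J)} = -\mu_k^{(J)},
\]
valid for $k\in\mathbb{Z}_{\ge0}$ directly and for all $\operatorname{Re}k>-1/2$ by analytic continuation; equivalently it follows from the factored form $r^{(J)}(x)=(1-x^2)^{a+1}q_{\beta(N-1)/2}(x^2)$ supplied by \eqref{S.2x} together with \eqref{6.2b} (this is also the source of the structure \eqref{S.2y}). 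Since $r^{(J)}$ is even, the left-hand sides of \eqref{E.6x} and \eqref{4.11} are \emph{odd} in $x$, which is what dictates multiplying the differential equation by an odd power $x^{2k+1}$ before integrating over $(-1,1)$.

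Carrying this out for $\beta=2$: multiply \eqref{E.6x} by $x^{2k+1}$, integrate over $(-1,1)$, and integrate by parts so that all derivatives are moved onto the polynomial $x^{2k+1}c_j(x)$ multiplying each $r^{(J),(j)}$ (three times for the $r'''$ term, twice for $r''$, once for $r'$). Every boundary term at $x=\pm1$ vanishes: near the endpoints $r^{(J)}\sim(1-x^2)^{a+1}$, and the coefficient of $r^{(J),(j)}$ in \eqref{E.6x} carries enough factors of $(1-x^2)$ that each partially integrated coefficient still vanishes there provided $a>0$ (the full admissible range of $a$, and of $k$, is then recovered by analytic continuation since both sides of the recurrence are rational in these variables). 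As $\frac{d^j}{dx^j}[x^{2k+1}c_j(x)]$ is an even polynomial, the result is a linear combination — with coefficients polynomial in $k$ — of the integrals $\int_{-1}^1 x^{2m}r^{(J)}=-\mu_m^{(J)}$ for $m\in\{k-1,k,k+1\}$; collecting and simplifying the three coefficients produces \eqref{E.7a}. For $\beta=1,4$ one applies the identical recipe to the fifth-order equation \eqref{4.11} — multiply by $x^{2k+1}$, integrate over $(-1,1)$, integrate by parts up to five times (boundary terms again vanishing for $a>0$), and reorganise the five $k$-polynomial coefficients of $\mu_{k-2}^{(J)},\dots,\mu_{k+2}^{(J)}$ into the displayed $f_{-2},\dots,f_2$ — to obtain \eqref{4.14}.

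For the initial data, observe that at $k=0$ the coefficient of $\mu_{k-1}^{(J)}$ in \eqref{E.7a} (and of both $\mu_{k-1}^{(J)}$ and $\mu_{k-2}^{(J)}$ in \eqref{4.14}) contains the factor $2k$ and hence vanishes, so the recurrences propagate forward once $\mu_0^{(J)}$ (for $\beta=2$), respectively $\mu_0^{(J)}$ and $\mu_1^{(J)}$ (for $\beta=1,4$), are known. Using \eqref{S.1} with $k=1$, the normalisation $m_0^{(J)}=N$, and the explicit second moment \eqref{S.2}, one gets $\mu_0^{(J)}=m_2^{(J)}-N=-2N+4\langle\sum_l x_l^2\rangle_{(0,1)}^{(J)}$; specialising $\beta/2=1$ and performing a short simplification (e.g. $(2a+2N-1)(2a+2N+1)=4(a+N)^2-1$ and $2a^2+3aN+N^2=(2a+N)(a+N)$) collapses this to \eqref{4.13}. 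The same computation with general $\beta$, re-expressed through $\tilde a$ and $\tilde c$, yields \eqref{4.15}; and $\mu_1^{(J)}=m_4^{(J)}-m_2^{(J)}$, with the fourth moment $m_4^{(J)}$ obtained from the Jack-polynomial formulas of \cite{MRW15,FLD16} exactly as $m_2^{(J)}$ was, yields \eqref{4.16} (equivalently, \eqref{4.16} is forced by \eqref{4.14} at $k=0$ once one uses \eqref{S.2y} to identify $\mu_k^{(J)}$ as the unique solution of that recurrence of the form $\Gamma(k+1/2)/\Gamma(k+a+\beta(N-1)/2+5/2)$ times a polynomial in $k$ of degree $\beta(N-1)/2+1$).

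The main obstacle is the fivefold integration by parts behind \eqref{4.14}: tracking the $(1-x^2)$-powers to confirm that all boundary contributions cancel, and then reorganising the resulting five $k$-polynomial coefficients into the explicit $f_{-2},\dots,f_2$, is a long and error-prone computation, though it is conceptually identical to the transparent $\beta=2$ case. The remaining technical points — vanishing of the boundary terms and the analytic continuations in $a$ and in $k$ — are routine once $r^{(J)}$ is written in the factored form $(1-x^2)^{a+1}q_{\beta(N-1)/2}(x^2)$.
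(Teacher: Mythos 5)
Your proposal is correct and follows essentially the same route as the paper: both multiply the differential equation \eqref{E.6x} (resp.\ \eqref{4.11}) for $r^{(J)}(x)=(1-x^2)\rho_{(1)}^{(J)}(x)|_{a=b}$ by an odd weight ($x|x|^{2k}$ in the paper, $x^{2k+1}$ in yours, which agree for integer $k$), integrate by parts over $(-1,1)$ with the boundary contributions vanishing for $a>0$ and the result extended by analytic continuation, and then fix $\mu_0^{(J)}$ from \eqref{S.1}--\eqref{S.2} and $\mu_1^{(J)}$ from an explicit fourth-moment computation (the paper does this via MOPS rather than the Jack-polynomial formulas you cite, which amounts to the same thing). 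The only quibble is your parenthetical suggestion that \eqref{4.16} is ``forced by \eqref{4.14} at $k=0$'': since $f_{-2}$ and $f_{-1}$ both vanish at $k=0$, the recurrence there determines $\mu_2^{(J)}$ from $\mu_0^{(J)}$ and $\mu_1^{(J)}$ rather than the reverse, so $\mu_1^{(J)}$ genuinely requires the independent computation you give as your primary route.
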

 
 \begin{proof}
 We first give the details for the $\beta=2$ recurrence \eqref{E.7a}. Begin by multiplying (\ref{E.6a}) by $x |x|^{2k}$ and then integrate over $x$ from $-1$ to $1$.
 From the definition (\ref{S.2a}), the final term of the differential equation transforms
 to
 $
 - 4 (a + N)^2 \mu_{k+1}.
 $
 For the other terms, integration by parts is required to reduce the integrations to the form
 of (\ref{S.2a}). Due to the absolute value sign in the multiplying term $x |x|^{2k}$, this
 requires considering the intervals $(0,1)$ and $(-1,0)$ separately. 
 The endpoints do not contribute to the integration.
 At  the origin this is due to the
 term $x |x|^{2k}$  with $k > - 1/2$, and at $\pm 1$, there is no contribution
   since
$ \rho_{(1)}^{(J)}(x) |_{a=b}$ is then zero for $a > 0$. 
After simplification, (\ref{E.7a}) results.

To obtain recurrence \eqref{4.14}, apply these same steps to equation \eqref{4.11}.
For both recurrences, the value of $\mu_0^{(J)}$ follows from \eqref{S.1}--\eqref{S.2}, while $\mu_1^{(J)}$ can be computed using MOPS \cite{MOPS} and equation \eqref{S.1}.
\end{proof}

\begin{remark}
1.~The analogue of (\ref{E.6x}) for the $\beta=2$ non-symmetric Jacobi case can be obtained from 
(\ref{3.27}) by the mappings $t \mapsto -i x$, $\alpha_1 \mapsto - N - (a+b)/2$ and $\alpha_2\mapsto (b-a)i/2$. Repeating the working which gave (\ref{E.7a}) leads to
  a recurrence which also involves $\mu_{k+1/2}$. In fact, this recurrence is given in \cite[Eq.~(31)]{Le04}, which reduces to \eqref{E.7a} upon setting $a=b$. For $\beta=1$ and $4$, one may take existing moment recurrences for the non-symmetric Jacobi ensemble on $(0,1)$ \cite{RF19} and apply equation \eqref{S.1}.\\
  2.~Scaling $x \mapsto x/ \sqrt{2a}$ in the definition (\ref{S.2a}) of $m_{2k}^{(J)}$ shows
  $$
  m_{2k}^{(J)} = {1 \over (2a)^{k+1/2}} \int_{-\sqrt{a}}^{\sqrt{a}} |x|^{2k} \rho_{(1)}^{(J)}(x/\sqrt{2a}) \Big |_{a=b} \, dx.
  $$
  Now, from the elementary limit $(1 - x^2/ 2a)^a \to e^{-x^2/2}$ as $a \to \infty$, it follows from the
  definition of $\mu_k^{(J)}$ that
  $$
  \lim_{a \to \infty} (2a)^{k+1/2}  \mu_k^{(J)}  = - m_{2k}^{{\rm GUE}^*},
  $$
  where GUE${}^*$ refers to the ensemble (\ref{1.9}) with weight $w(x) = e^{- x^2/2}$ and $\beta = 2$.
  Multiplying (\ref{E.7a}) by $(2 a)^{k-1/2}$ and taking $a \to \infty$ thus implies the
  recurrence for $\{ m_{2k}^{{\rm GUE}^*} \}$,
   \begin{equation}\label{E.6y} 
   - (k+2)  m_{2k+2}^{{\rm GUE}^*} + 2 N (2 k + 1)  m_{2k}^{{\rm GUE}^*}  + k (2k + 1) (2k - 1)
 m_{2k-2}^{{\rm GUE}^*}   = 0,
 \end{equation}
 found originally by Harer and Zagier \cite{HZ86}. Likewise, multiplying \eqref{4.14} by $(2a)^{k-3/2}$ and taking $a\to\infty$ recovers the recurrences for the GOE and GSE moments given in \cite[Thm.~2]{Le09}, \cite[Thms.~11 and 17]{WF14}.
  \end{remark}

As a simple consequence of the relation \eqref{4.8}, Proposition \ref{P4.1} leads to recurrences on the sums of successive even moments $\mu_k^{(Cy)}$ of the $\beta=1,2$ and $4$ Cauchy ensembles.

\begin{corollary} \label{C4.3}
Define $\mu_k^{(Cy)}$ by \eqref{4.7} and retain the definitions of $f_{-2},\ldots,f_2$ given in Proposition \ref{P4.1}. For $\beta=2$, the sequence $\{\mu_{k+j}^{(Cy)}\}_{j\in\mathbb{Z}}$ satisfies the recurrence
\begin{multline}\label{4.21}
(2k+4)\left[(2k+3)^2-4\alpha^2\right]\mu_{k+1}^{(Cy)}+2(2k+1)\left[(2k+2)^2+2N(N+2\alpha)\right]\mu_k^{(Cy)} \\
+(2k+1)(2k)(2k-1)\mu_{k-1}^{(Cy)}=0.
\end{multline}
The initial condition determining the sequence $\{ \mu_k^{(Cy)} |_{\beta = 2} \}_{k=0}^\infty$ is
\begin{equation} \label{4.21a}
\mu_0^{(Cy)} = {2N \alpha (N + 2 \alpha) \over (2 \alpha - 1) (2 \alpha + 1)},\quad\alpha>1/2.
\end{equation}
For $\beta=1$ and $4$, the sequence $\{\mu_{k+j}^{(Cy)}\}_{j\in\mathbb{Z}}$ satisfies the recurrence
\begin{equation} \label{4.20}
\sum_{l=-2}^2g_l\,\mu_{k+l}^{(Cy)}=0,
\end{equation}
where $g_l:=(-1)^{l-1}f_l\big|_{a\mapsto -\beta(N-1)/2-1-\alpha}$ with initial conditions given by parsing equations \eqref{4.15} and \eqref{4.16} through \eqref{4.8}.
\end{corollary}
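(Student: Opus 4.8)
The plan is to obtain Corollary~\ref{C4.3} as a direct translation of Proposition~\ref{P4.1} under the parameter specialization $a \mapsto -\beta(N-1)/2-1-\alpha$, using the moment identity (\ref{4.8}). I would start from the recurrence satisfied by $\{\mu_{k+j}^{(J)}\}_{j\in\mathbb Z}$ --- equation (\ref{E.7a}) for $\beta=2$, equation (\ref{4.14}) for $\beta=1,4$ --- which Proposition~\ref{P4.1} asserts holds, by analytic continuation, for all $k$ with ${\rm Re}\,k>-1/2$ and all $a$ for which the sequence is well defined. Setting $a = -\beta(N-1)/2-1-\alpha$ and substituting $\mu_{k+l}^{(J)}\big|_{a\mapsto -\beta(N-1)/2-1-\alpha} = (-1)^{k+l-1}\mu_{k+l}^{(Cy)}$ from (\ref{4.8}) turns each term of the Jacobi recurrence into a multiple of the corresponding $\mu_{k+l}^{(Cy)}$; dividing through by the common factor $(-1)^{k}$ leaves the coefficient of $\mu_{k+l}^{(Cy)}$ equal to $(-1)^{l-1}f_l\big|_{a\mapsto -\beta(N-1)/2-1-\alpha}$, which is precisely the $g_l$ in the statement for $\beta=1,4$.

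For $\beta=2$ I would then simplify explicitly. Under the substitution one has $a+N\mapsto-\alpha$, hence $4(a+N)^2\mapsto 4\alpha^2$, while $-2N(N+2a)\mapsto 2N(N+2\alpha)$, so the middle bracket $(2k+2)^2-2N(N+2a)$ of (\ref{E.7a}) becomes $(2k+2)^2+2N(N+2\alpha)$; together with the relative sign change between the middle and outer terms produced by the alternating factors $(-1)^{k+l-1}$, the recurrence (\ref{E.7a}) turns into (\ref{4.21}). As a consistency check one can verify that (\ref{4.21}) is also what results from applying the same multiply-by-$x|x|^{2k}$-and-integrate procedure used in the proof of Proposition~\ref{P4.1} directly to the $\beta=2$ Cauchy differential equation (\ref{E.6}), the relevant moments being $\mu_k^{(Cy)}=\int_{-\infty}^{\infty}|x|^{2k}(1+x^2)\rho_{(1)}^{(Cy)}(x)\,dx$.

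The initial conditions are handled the same way. Evaluating (\ref{4.8}) at $k=0$ gives $\mu_0^{(Cy)}=-\mu_0^{(J)}\big|_{a\mapsto -\beta(N-1)/2-1-\alpha}$, and at $k=1$ gives $\mu_1^{(Cy)}=\mu_1^{(J)}\big|_{a\mapsto -\beta(N-1)/2-1-\alpha}$. For $\beta=2$, feeding $a+N\mapsto-\alpha$ and $2a+N\mapsto-(N+2\alpha)$ into (\ref{4.13}) and negating yields $\mu_0^{(Cy)}=2N\alpha(N+2\alpha)/\big((2\alpha-1)(2\alpha+1)\big)$, which is (\ref{4.21a}); the restriction $\alpha>1/2$ is exactly the well-definedness condition ${\rm Re}\,k<\alpha-1/2$ for $\mu_k^{(Cy)}$ at $k=0$ recorded after (\ref{4.7}). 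For $\beta=1,4$ one simply parses (\ref{4.15}) and (\ref{4.16}) through (\ref{4.8}), as indicated in the statement.

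I do not expect a genuine obstacle here: the content is sign bookkeeping together with two elementary parameter simplifications. The one point that needs care is the appeal to analytic continuation. Proposition~\ref{P4.1} was established under $a>0$ (so that $\rho_{(1)}^{(J)}(x)|_{a=b}$ vanishes at $x=\pm1$ and the boundary terms in the integration-by-parts step drop), whereas the Cauchy specialization forces $a=-\beta(N-1)/2-1-\alpha<0$; and (\ref{4.8}) is itself an identity in the sense of analytic continuation, valid a priori only for $-1/2<{\rm Re}\,k<\alpha-1/2$. So I would state explicitly that the Jacobi recurrence, viewed as a polynomial identity in $a$ and $k$, together with (\ref{4.8}), continue analytically, and that the resulting Cauchy recurrence --- although derived in that restricted range of $k$ --- then determines the sequence $\{\mu_k^{(Cy)}\}_{k=0}^{\infty}$ from the stated initial data.
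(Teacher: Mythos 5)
Your proposal is correct and follows essentially the same route as the paper, which presents Corollary \ref{C4.3} precisely as the image of Proposition \ref{P4.1} under the substitution $a\mapsto-\beta(N-1)/2-1-\alpha$ combined with the sign relation (\ref{4.8}); your sign bookkeeping (coefficient $(-1)^{l-1}f_l$ after dividing by $(-1)^k$, and the negation giving (\ref{4.21a})) checks out. The added remarks on the analytic-continuation caveat and the direct consistency check against (\ref{E.6}) are sound but not needed beyond what the paper itself records.
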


\begin{remark}
Let $\tilde{m}_k^{(J)} = \tilde{m}_k^{(J)}(N,\beta,a,b)$ denote the $k$-th moment of the spectral
density for the Jacobi $\beta$-ensemble defined on $(0,1)$. It is known
\cite{DP12,FLD16,FRW17} that
   \begin{equation}\label{4.30x}
\tilde{m}_k^{(J)}(N,\beta,a,b) = - (2/\beta) \tilde{m}_k^{(J)}(-\beta N/2,4/\beta,-2a/\beta,-2b/\beta).
\end{equation}
It follows from (\ref{S.1}) that this remains true for the moments $m_k^{(J)}$ of the spectral
density for the Jacobi $\beta$-ensemble defined on $(-1,1)$. Suppose now we set
$$
a = -\beta (N - 1)/2 - 1 - \bar{\alpha}, \qquad b = - \beta (N - 1)/2 - 1 - \alpha.
$$
Then we know from Proposition \ref{P2.6} that $m_k^{(J)}$ equals $(-1)^{k/2} m_k^{(Cy)}(N,\beta,\alpha,\bar{\alpha})$. Relating
the RHS of (\ref{4.30x}) to $m_k^{(Cy)}$ for renormalised parameters shows
  \begin{equation}\label{4.31x}
  m_k^{(Cy)}(N,\beta,\alpha,\bar{\alpha}) =  - (2/\beta)  m_k^{(Cy)}(-\beta N/2, 4/\beta,-2\alpha/\beta,-2 \bar{\alpha}/\beta).
  \end{equation}
  It can be checked that recurrence \eqref{4.20} respects the duality \eqref{4.31x}.
  \end{remark}

\subsection{Continuous Hahn polynomials}
The $\beta=2$ recurrence \eqref{E.7a} simplifies upon introducing the rescaling
  \begin{equation}\label{S.2b} 
  \mu_{k}^{(J)}  =  {\Gamma(k+1/2) \over \Gamma(k +  \beta (N-1)/2  + a + 5/2) }  \tilde{\mu}_k^{(J)},
 \end{equation}
as motivated by (\ref{S.2y}). According to this rescaling, we have
 \begin{align*}
 \mu_{k+1}^{(J)} & = - {k + 1/2 \over N + a + k + 3/2} {\Gamma(k+1/2) \over  \Gamma(N + a +  k + 3/2)} \tilde{\mu}_{k+1}^{(J)}, \\
 \mu_{k-1}^{(J)} & = - {N + a + k + 1/2 \over k - 1/2}  {\Gamma(k+1/2)  \over \Gamma(N + a + k + 3/2)}  \tilde{\mu}_{k-1}^{(J)}  .
 \end{align*}
 Substituting these expressions into (\ref{E.7a}) gives a three term recurrence with coefficients that are quadratic rather
 than cubic in $k$. (We remark that this simplification does not extend to the $\beta=1,4$ recurrence \eqref{4.14}.)
 
 \begin{corollary}
 Define $ \tilde{\mu}_k^{(J)}$ by (\ref{S.2b}) and (\ref{S.2a}). For $\beta = 2$, the sequence  $\{ \tilde{\mu}_{k+j}^{(J)} \}_{j \in \mathbb Z}$
 satisfies the recurrence
  \begin{multline}\label{E.7b} 
 (2 k + 4)[  (2 k + 3) - 2 (a + N)  ] \tilde{\mu}_{k+1} + 2  
 [  ( 2k+2)^2  - 2 N ( N + 2a)] \tilde{\mu}_k \\
 + (2k ) (2 (N + a + k) + 1)  \tilde{\mu}_{k-1} = 0.
 \end{multline}
 \end{corollary}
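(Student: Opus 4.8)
The plan is to substitute the rescaling \eqref{S.2b} into the $\beta=2$ recurrence \eqref{E.7a} and to check that, once a single common factor is removed, the coefficients collapse from cubic to quadratic in $k$; most of the work is already set out in the two displayed identities preceding the statement. First I would use \eqref{S.2b} together with the functional equation $\Gamma(z+1)=z\Gamma(z)$ — noting that for $\beta=2$ one has $\beta(N-1)/2+a+5/2=N+a+3/2$ — to write $\mu_{k\pm1}^{(J)}$ as $\Gamma(k+1/2)/\Gamma(N+a+k+3/2)$ times, respectively, $\bigl((k+1/2)/(N+a+k+3/2)\bigr)\tilde{\mu}_{k+1}^{(J)}$ and $\bigl((N+a+k+1/2)/(k-1/2)\bigr)\tilde{\mu}_{k-1}^{(J)}$; together with $\mu_k^{(J)}=\bigl(\Gamma(k+1/2)/\Gamma(N+a+k+3/2)\bigr)\tilde{\mu}_k^{(J)}$ this exhibits the common prefactor $\Gamma(k+1/2)/\Gamma(N+a+k+3/2)$ in all three terms of \eqref{E.7a}, which I divide out.

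The second step is to cancel the two rational factors $1/(N+a+k+3/2)$ and $1/(k-1/2)$ against pieces of the polynomial coefficients of \eqref{E.7a}. For the $\mu_{k+1}$ coefficient I use
\[
(2k+3)^2-4(a+N)^2 = \bigl(2k+3-2(a+N)\bigr)\bigl(2k+3+2(a+N)\bigr) = 2\bigl(N+a+k+3/2\bigr)\bigl(2k+3-2(a+N)\bigr),
\]
so the factor $N+a+k+3/2$ is absorbed; for the $\mu_{k-1}$ coefficient I use $2k-1=2(k-1/2)$ to absorb $k-1/2$. After these cancellations the three coefficients are again polynomials in $k$, and each visibly contains the factor $2k+1$: it is already explicit in the $\mu_k$ and $\mu_{k-1}$ coefficients of \eqref{E.7a}, while in the $\mu_{k+1}$ coefficient it appears as $2(k+1/2)$ once the cancellation above is carried out. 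Dividing the recurrence through by $2k+1$ — harmless since $2k+1$ does not vanish on the range where the $\mu_k^{(J)}$ are defined, and in any case the resulting identity between rational functions of $k$ holds on a Zariski-dense set — then gives \eqref{E.7b} exactly, with coefficients now quadratic in $k$.

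There is no real obstacle here beyond careful bookkeeping of the gamma-function shifts and the polynomial factorisations; the substance of the statement is precisely the final observation, that the degree of the recurrence drops by one under \eqref{S.2b}. This rescaling is itself suggested by the closed form \eqref{S.2y} for the even moments, and, as the parenthetical remark before the statement records, the analogous reduction fails for the five-term $\beta=1,4$ recurrence \eqref{4.14}.
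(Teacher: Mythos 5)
Your route is exactly the paper's: the paper's entire proof of this corollary consists of the two displayed identities for $\mu_{k\pm1}^{(J)}$ immediately preceding the statement, followed by the sentence that substituting them into \eqref{E.7a} gives a three-term recurrence with quadratic rather than cubic coefficients. Your factorisations $(2k+3)^2-4(a+N)^2=2(N+a+k+3/2)\bigl(2k+3-2(a+N)\bigr)$ and $2k-1=2(k-1/2)$, together with the final division by $2k+1$, are precisely the intended bookkeeping.

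There is, however, a sign issue you have glossed over, and it is worth pinning down because it separates your intermediate identities from the paper's. Your relations $\mu_{k+1}^{(J)}=\tfrac{k+1/2}{N+a+k+3/2}\,\tfrac{\Gamma(k+1/2)}{\Gamma(N+a+k+3/2)}\tilde{\mu}_{k+1}^{(J)}$ and $\mu_{k-1}^{(J)}=\tfrac{N+a+k+1/2}{k-1/2}\,\tfrac{\Gamma(k+1/2)}{\Gamma(N+a+k+3/2)}\tilde{\mu}_{k-1}^{(J)}$ are what $\Gamma(z+1)=z\Gamma(z)$ actually gives; the paper's displayed versions carry an extra overall minus sign on each, which is wrong as an identity. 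But if you push your correct relations through \eqref{E.7a} and divide by $2k+1$, the middle coefficient comes out as $-2\,[(2k+2)^2-2N(N+2a)]$, not $+2\,[\cdots]$ as printed in \eqref{E.7b}: the paper's two spurious minus signs are exactly what flip that sign, since with them one divides the whole relation by $-(2k+1)$ instead of $2k+1$. So your claim that the computation "gives \eqref{E.7b} exactly" cannot be literally true. A quick check at $N=1$, $a=0$, where $\rho_{(1)}^{(J)}\equiv1/2$ on $(-1,1)$ so that $\tilde{\mu}_k^{(J)}\equiv-1/2$, shows the left-hand side of \eqref{E.7b} as printed evaluates to $-(8k^2+16k+4)\neq0$, whereas it vanishes identically once the middle sign is reversed; the reversed-sign version is also the one that matches the continuous Hahn difference equation \eqref{H.1}--\eqref{H.2} invoked in the following proposition. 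In short: your method is the paper's method and your intermediate algebra is the correct version of it, but what it proves is \eqref{E.7b} with the middle term negated — the printed statement and the paper's two auxiliary displays contain a compensating pair of sign errors that you should not reproduce silently.
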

 
 \begin{remark}\label{R4.4}
 For a given value of $\tilde{\mu}_0$, the recurrence (\ref{E.7b}) uniquely
 specifies $\{ \tilde{\mu}_k^{(J)} \}_{k=1}^\infty $. With $\tilde{\mu}_0$ defined by
 (\ref{4.13}) and (\ref{S.2b}), we have that 
  \begin{equation}\label{E.7c}
 \tilde{\mu}_0^{(J)} =  {\Gamma(N + a + 3/2) \over \Gamma(1/2)}
 \Big ( {2 N (a + N) (2 a + N)  \over 1 - 4 (a + N)^2}  \Big ).
 \end{equation}
 \end{remark}
 
 At this stage, following \cite{ABGS20}, introduce the continuous Hahn polynomials
 \begin{equation}\label{H}
 S_n(x;a,b,c,d) = i^n {(a+c)_n (a + d)_n \over n!} \,
 \hypergeometric{3}{2}{-n,n+a+b+c+d-1,a+ix}{a+c,a+d}{1},  
 \end{equation}
  where on the RHS the notation is standard as for  hypergeometric functions.
  Further specialise these polynomials by writing
  $s_n(x;a,b) = S_n(x;a,b,\bar{a},\bar{b})$. The latter polynomials satisfy the
  difference equation in $x$ \cite[18.22.13--18.22.15]{DLMF},
  \begin{equation}\label{H.1} 
  A(x) p_n(x+i) - (A(x) + C(x) - n (n + 2 {\rm Re} \, (a+b) - 1) ) p_n(x) + C(x) p_n(x-i) = 0,
  \end{equation}
  where
 \begin{equation}\label{H.2}  
 A(x) = (x + i \bar{a}) (x + i \bar{b}), \qquad  C(x) = (x - i {a}) (x - i {b}).
  \end{equation} 
  
  \begin{proposition}
  Let $ \tilde{\mu}_0^{(J)}$ be given by (\ref{E.7c}). For general ${\rm Re} \, k > -1/2$ and $\beta = 2$, we have
    \begin{equation}\label{H.4}
  \tilde{\mu}_k^{(J)}    =   { \tilde{\mu}_0^{(J)}  i^{1-N} \over N (3/2 - (a + N))_{N-1}} 
   s_{N-1} \Big (i(k+1);1,{1 \over 2} - (a + N) \Big ),
   \end{equation}
   and thus $ {\mu}_k^{(J)}$ with $\beta = 2$ is given in terms of the continuous Hahn polynomials according to
     \begin{equation}\label{H.4a}
 {\mu}_k^{(J)}    =    {\tilde{\mu}_0^{(J)} i^{1-N}  \Gamma(k+1/2) \over  N \Gamma(k +  N  + a + 3/2)  (3/2 - (a + N))_{N-1}} 
   s_{N-1} \Big (i(k+1);1,{1 \over 2} - (a + N) \Big ).
   \end{equation}

   \end{proposition}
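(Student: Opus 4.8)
The plan is to identify the right-hand side of \eqref{H.4}, call it $\nu_k$, as the unique solution of the recurrence \eqref{E.7b} taking the prescribed value at $k=0$, and to observe that $\tilde{\mu}_k^{(J)}$ is that same solution. Two structural facts make this work. First, both sequences are polynomials in $k$ of degree at most $N-1$: for $\tilde{\mu}_k^{(J)}$ this follows from \eqref{S.2y} together with the rescaling \eqref{S.2b} (with $\beta=2$, so that $\beta(N-1)/2+a+5/2=N+a+3/2$), while for $\nu_k$ it is immediate since $\nu_k$ is a fixed multiple of $s_{N-1}$ evaluated at the argument $i(k+1)$, which is linear in $k$. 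Second, by Remark \ref{R4.4} the coefficient of $\tilde{\mu}_{k-1}$ in \eqref{E.7b} vanishes at $k=0$, so \eqref{E.7b} at $k=0$ determines $\tilde{\mu}_1$ from $\tilde{\mu}_0$, and \eqref{E.7b} for $k\ge 1$ then propagates the whole sequence; hence any solution of \eqref{E.7b} is pinned down by its value at $k=0$. It therefore suffices to check that (i) $\nu_k$ satisfies \eqref{E.7b}, and (ii) $\nu_0=\tilde{\mu}_0^{(J)}$: agreement at $k=0$ forces agreement at all non-negative integers, and two polynomials of degree $\le N-1$ agreeing at infinitely many points coincide, giving \eqref{H.4} for all $k$ with ${\rm Re}\,k>-1/2$.

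For (i) I would invoke the continuous Hahn difference equation \eqref{H.1}--\eqref{H.2} for $p_{N-1}(x)=s_{N-1}(x;1,\tfrac{1}{2}-(a+N))$; both parameters being real, $A(x)=(x+i)\bigl(x+i(\tfrac{1}{2}-(a+N))\bigr)$, $C(x)=(x-i)\bigl(x-i(\tfrac{1}{2}-(a+N))\bigr)$ and $n(n+2\,{\rm Re}(a+b)-1)=(N-1)\bigl(N+1-2(a+N)\bigr)$. The decisive step is the substitution $x=i(k+1)$, which sends $x\pm i\mapsto i((k\pm 1)+1)$, so that $p_{N-1}(x\pm i)=s_{N-1}(i((k\pm 1)+1);1,\tfrac{1}{2}-(a+N))$ plays the role of $\nu_{k\pm 1}$. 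Since $A(i(k+1))=-(k+2)\bigl(k+\tfrac{3}{2}-(a+N)\bigr)$ and $C(i(k+1))=-k\bigl(k+\tfrac{1}{2}+(a+N)\bigr)$ are explicit quadratics in $k$, substituting into \eqref{H.1} and simplifying shows that, up to an overall $k$-independent factor, \eqref{H.1} becomes exactly the three-term recurrence \eqref{E.7b} with $s_{N-1}(i((k\pm1)+1);\dots)$ in place of $\tilde{\mu}_{k\pm1}$. As \eqref{H.1} is a polynomial identity in $x$, this is valid for all $k$, and the constant prefactor in $\nu_k$ is immaterial for a homogeneous linear recurrence; hence $\nu_k$ solves \eqref{E.7b}.

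For (ii) I would compute $s_{N-1}(i;1,\tfrac{1}{2}-(a+N))$ straight from the representation \eqref{H}. With first parameter $1$ and $x=i$, the numerator parameter $a+ix$ of the ${}_3F_2$ equals $1+i\cdot i=0$, so the series collapses to its $n=0$ term, which is $1$; since $(1+1)_{N-1}/(N-1)!=(2)_{N-1}/(N-1)!=N$ and $(a+d)_{N-1}=(\tfrac{3}{2}-(a+N))_{N-1}$, this yields $s_{N-1}(i;1,\tfrac{1}{2}-(a+N))=i^{N-1}N(\tfrac{3}{2}-(a+N))_{N-1}$. Inserting this together with the prefactor of \eqref{H.4} gives $\nu_0=\tilde{\mu}_0^{(J)}$, the factors $i^{1-N}$ and $i^{N-1}$ cancelling, which completes the proof of \eqref{H.4}. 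Formula \eqref{H.4a} then follows at once upon multiplying \eqref{H.4} through by $\Gamma(k+1/2)/\Gamma(k+N+a+3/2)$ as prescribed by the rescaling \eqref{S.2b}.

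The bulk of the work is the bookkeeping in step (i): matching, after the substitution $x=i(k+1)$, the three coefficients of the shifted difference equation \eqref{H.1} against those of \eqref{E.7b}, with due care for the factors of $i^2=-1$ produced by $A$ and $C$ and for the overall normalisation. It is precisely the identification $x=i(k+1)$ that converts the argument-shifts $x\mapsto x\pm i$ of the Hahn difference equation into the index-shifts $k\mapsto k\pm1$ of the moment recurrence; by contrast the propagation argument and the collapse of a ${}_3F_2$ with a vanishing numerator parameter are routine.
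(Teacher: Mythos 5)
Your argument is correct and is essentially the paper's own proof: identify the continuous Hahn difference equation \eqref{H.1}--\eqref{H.2} under the substitution $x=i(k+1)$ with the three-term recurrence, pin down the constant by evaluating $s_{N-1}(i;\cdot)$ through the collapse of the ${}_3F_2$ at its vanishing numerator parameter, and conclude by uniqueness of the recurrence solution together with polynomiality in $k$. One caution for when you carry out the coefficient matching in step (i): the middle coefficient produced by \eqref{H.1} at $x=i(k+1)$ is $-2[(2k+2)^2-2N(N+2a)]$, i.e.\ opposite in sign to that printed in \eqref{E.7b} (the two rescaling relations displayed just before \eqref{E.7b} carry spurious minus signs relative to \eqref{S.2b}); this is a sign typo in the paper rather than a flaw in your approach, as one can confirm by testing \eqref{E.7a} at $N=1$, where $\tilde{\mu}_k^{(J)}$ is constant in $k$ and only the corrected sign is consistent.
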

  
  \begin{proof}
  Comparing (\ref{H.1}), (\ref{H.2}) with (\ref{E.7b}) we see that
  \begin{equation}\label{H.3}
 C_{N,a}  s_{N-1} \Big (i(k+1);1,{1 \over 2} - (a + N) \Big ),
    \end{equation} 
    where $C_{N,a}$ is independent of $k$,
    satisfies the recurrence (\ref{E.7b}). This is also a polynomial of degree $(N-1)$ in $k$,
    which upon choosing 
   \begin{equation}
    C_{N,a} =  { \tilde{\mu}_0^{(J)}  \over s_{N-1}\Big (i;1,{1 \over 2} - ( a  + N) \Big )} = 
     { \tilde{\mu}_0^{(J)}  i^{1-N} \over N (3/2 - (a + N))_{N-1}} ,
  \end{equation}      
  where the second equality follows from (\ref{H}), agrees with $ \tilde{\mu}_k^{(J)}$ for
  $k = 0$.
   The recurrence (\ref{E.7b}) then gives that this polynomial is equal to $\tilde{\mu}_k^{(J)}$
   at each positive
    integer value of $k$, and thus the polynomials must in fact be identical, giving (\ref{H.4}).
    
    The formula (\ref{H.4a}) follows from (\ref{H.4}) using (\ref{S.2b}) with $\beta = 2$.
    \end{proof}
    
    \begin{remark}
    For $\beta$ even, we know from the discussion leading to (\ref{S.2b}) that $\mu_k^{(J)}$ permits a
    factorisation involving a polynomial part $\tilde{\mu}_k^{(J)}$. For $\beta = 2$, this polynomial is given
    in terms of particular continuous Hahn polynomials according to (\ref{H.4}).
    As noted in \cite{ABGS20} in the context of the moments of the symmetric Cauchy ensemble (the exact evaluation of
  $ \mu_k^{(Cy)}$ in terms of continuous Hahn polynomials implied by \eqref{4.8}, (\ref{H.4a}) agrees
  with that given therein), the fact that $s_n(x;a,b)$ has all real zeros implies that the zeros of
    $\tilde{\mu}_k |_{\beta = 2}$ are all on the line ${\rm Re} \, k = - 1$. However, this property does not
    carry over to other values of $\beta$. Thus, it is a simple exercise to compute explicitly $\tilde{\mu}_k^{(J)}$
    with $\beta$ even in the case $N = 2$. Further specialising to $\beta = 4$, when the polynomial is a
    quadratic, it is found that the zeros are distinct, and lie on the negative real axis at positions varying with $a$.
    \end{remark}

  \subsection{Resolvent for the symmetric Cauchy case}
Accompanying the definition \eqref{4.7} of the sum of successive even moments $\mu_k^{(Cy)}$ of the Cauchy ensemble is the requirement that $-1/2 < {\rm Re} \, k < \alpha - 1/2$. This implies that $\mu_k^{(Cy)}$, and thus $m_k^{(Cy)}$, is well defined for only finitely many positive integers $k$, which, in turn, prevents us from making sense of their generating functions. This complication vanishes in the large $N$ limit when we set $\alpha=\hat{\alpha}\beta N/2$ as seen in \S\ref{s3.1.3}. We divide this subsection into three. In the first part, we study the generating functions of the large $N$ limiting forms of the $\mu_k^{(Cy)}$ and $m_k^{(Cy)}$. We then study the finite-$N$ analogues of these generating functions, which are to be understood in a formal sense through analytic continuation. In the third and final part, we use the differential equations of Section \ref{S3} to study the $1/N$ expansions of the aforementioned formal generating functions.

  \subsubsection{Large $N$ limit}
Here, we will make a study of the moments in the large $N$ limit to give a complementary viewpoint to the results
of \S\ref{s3.1.3}. For this purpose, define
  $$
  \hat{\mu}_k^{(Cy)} = \lim_{N \to \infty} {1 \over N} \mu_k^{(Cy)}  \Big |_{\alpha = \hat{\alpha}\beta N/2}, \qquad
  \hat{m}_k^{(Cy)} = \lim_{N \to \infty} {1 \over N} m_{2k}^{(Cy)}  \Big |_{\alpha = \hat{\alpha}\beta N/2} 
  $$
  and introduce the generating functions
   \begin{equation}\label{H1}
   H(x) = \sum_{k=0}^\infty  { \hat{\mu}_k^{(Cy)} \over x^{2k}} , \qquad G(x) = \sum_{k=0}^\infty   {\hat{m}_{2k}^{(Cy)} \over x^{2k}}.
  \end{equation}
  The definition (\ref{4.7}) shows that the latter are related by
    \begin{equation}\label{H2}
    G(x) =  {H(x) + x^2 \over 1 + x^2}.
  \end{equation}
  Furthermore, with $\hat{\rho}_{(1)}^{(Cy)}$ denoting the LHS of (\ref{9.3z}), and $I$ denoting its support, for $x \notin I$, we have
\begin{equation}\label{H3a}  
G(x) = \int_I {\hat{\rho}_{(1)}^{(Cy)}(y) \over 1 - (y/x)^2} \, dy,
 \end{equation}
 and so by the Sokhotski-Plemelj formula,
\begin{equation}\label{H3b}  
s \hat{\rho}_{(1)}^{(Cy)}(s) = {1 \over \pi} \lim_{\epsilon \to 0^+} {\rm Im} \, G(x) \Big |_{x = s - i \epsilon}.  
 \end{equation}

We will now show how to use (\ref{H3b}) to re-derive (\ref{9.3z}). To begin, 
with $\alpha = \hat{\alpha} N$ in (\ref{4.21}), (\ref{4.21a}), equating leading powers in $N$ shows
\begin{equation}\label{H4} 
- (2k + 4) \hat{\alpha}^2  \hat{\mu}_{k+1}^{(Cy)} + (2k + 1) (1 + 2  \hat{\alpha}) \hat{\mu}_{k}^{(Cy)} = 0
\end{equation}
subject to the initial condition
\begin{equation}\label{H4a} 
 \hat{\mu}_{0}^{(Cy)} = {1 + 2 \hat{\alpha} \over 2 \hat{\alpha} }.
\end{equation}
Hence, by iterating (\ref{H4}), it follows that
\begin{equation}\label{H4b} 
  \hat{\mu}_{k}^{(Cy)} = {  (1 + 2  \hat{\alpha})^{k+1} \over 2  \hat{\alpha}^{2k+1}} {(1/2)_k \over (2)_k} = {  (1 + 2  \hat{\alpha})^{k+1} \over (2  \hat{\alpha})^{2k+1}} C_k,
\end{equation}
where $(u)_k$ is the (rising) Pochhammer  symbol, and $C_k$ denotes the $k$-th Catalan number (recall (\ref{1.1})). This substituted in the first equation of (\ref{H1}) tells
us that
\begin{align}\label{H4c} 
H(x) = {  (1 + 2  \hat{\alpha} ) \over 2  \hat{\alpha} }\, {}_2 F_1 (1, 1/2; 2; (1 + 2  \hat{\alpha} )/(\hat{\alpha} x)^2) 
 =  {  (1 + 2  \hat{\alpha} ) \over   \hat{\alpha} } {1 - \sqrt{1 - z} \over z} \Big |_{z = (1 + 2  \hat{\alpha} )/(\hat{\alpha} x)^2}.
 \end{align}
Substituting in (\ref{H2}) and then substituting the result in (\ref{H3b}), we reclaim (\ref{9.3z}). It can be checked that \eqref{H4b} satisfies \eqref{4.20} with $\alpha=\hat{\alpha}\beta N/2$, owing to  $\hat{\rho}^{(Cy)}_{(1)}(s)$ then
being $\beta$ independent.

 \subsubsection{Finite $N$}
At this point, let us stress that the $\mu_k^{(Cy)}$ and their recurrences have formal interpretations through analytic continuation without the need for taking $N$ large. Hence, we are able to complement the above computation with the $1/N$ correction to $\hat{\rho}^{(Cy)}_{(1)}(s)$. To this end, introduce the formal sums
\begin{equation} \label{4.41}
W_1(x)=\sum_{k=0}^{\infty}\frac{m_{k}^{(Cy)}}{x^{k+1}},\qquad\hat{W}_1(x)=W_1(x) \Big|_{\alpha=\hat{\alpha}\beta N/2}
\end{equation}
so that for $\beta=2$, $\lim_{N\to\infty}x\hat{W}_1(x)=G(x)$. It is known for the Gaussian, Laguerre and Jacobi (with $(0,1)$ as support) ensembles that the corresponding resolvents defined analogously to $W_1(x)$ above satisfy the same differential equations as the densities $\rho_{(1)}(x)$ with additional inhomogeneous terms. The same is true in the Cauchy case, seen by a tweaking of the reasoning in \cite[Appendix A]{RF19}. Thus, replacing $\rho^{(Cy)}_{(1)}(x)$ in the differential equation \eqref{3.7} by the expression for $W_1(x)$ given in \eqref{4.41} yields inhomogeneous differential equations satisfied by $W_1(x)$.

\begin{proposition} \label{P4.9}
In the setting of Proposition \ref{P3.3}, we have
\begin{equation} \label{4.42}
\mathcal{D}^{(Cy)}_{\beta,N}\,\frac{1}{N}W_1(x)=\begin{cases} 4(N+\alpha)(N+2\alpha),&\beta=2, \\ h(x;N,\alpha),&\beta=4, \\ h(x;-N/2,-2\alpha),&\beta=1,\end{cases}
\end{equation}
where
\begin{multline}
h(x;N,\alpha)=8(2N+2\alpha-1)\Big[(2N+1)(8N^2+x^2-1)+4N\alpha(6N+x^2+3) \\
+\alpha^2(3x^2-4N(x^2-2)+5)-2\alpha(1+x^2\alpha^2)\Big].
\end{multline}
\end{proposition}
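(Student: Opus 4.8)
The plan is to establish the structural claim first — that $\frac{1}{N}\mathcal{D}^{(Cy)}_{\beta,N}W_1(x)$ is a polynomial in $x$ — by an analytic‑continuation argument of the type used in \cite[Appendix A]{RF19}, and then to pin that polynomial down by a Laurent expansion at infinity using only the lowest moments. For the first part, observe that when ${\rm Re}\,\alpha$ is large enough the weight $(1+x^2)^{\alpha_{\beta,N}}$ decays rapidly, so the formal series \eqref{4.41} is the genuine Stieltjes transform $W_1(x)=\int_{\mathbb R}\rho_{(1)}^{(Cy)}(t)(x-t)^{-1}\,dt$, holomorphic on $\mathbb C\setminus\mathbb R$ with boundary jump $W_1(s-i0)-W_1(s+i0)=2\pi i\,\rho_{(1)}^{(Cy)}(s)$. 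Since $\mathcal{D}^{(Cy)}_{\beta,N}$ has polynomial coefficients, $\mathcal{D}^{(Cy)}_{\beta,N}W_1$ is again holomorphic off $\mathbb R$ and its jump across $\mathbb R$ equals $2\pi i\,\mathcal{D}^{(Cy)}_{\beta,N}\rho_{(1)}^{(Cy)}$, which vanishes by Proposition \ref{P3.3}. Hence $\mathcal{D}^{(Cy)}_{\beta,N}W_1$ continues to an entire function; because $W_1(x)=N/x+O(x^{-3})$ as $|x|\to\infty$ it grows at most polynomially, so it is a polynomial. As $W_1$ is odd and $\mathcal{D}^{(Cy)}_{\beta,N}$ is readily checked to send odd functions to even ones, this polynomial is even; and — exactly as happens for the density itself — the leading coefficients at infinity cancel, leaving degree $0$ when $\beta=2$ and degree $2$ when $\beta=1,4$. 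The statement for general $\alpha$ then follows by analytic continuation in $\alpha$, consistently with the formal status of \eqref{4.41} and with the continuation arguments of Section \ref{S2}.

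To identify the polynomial it suffices to expand $\mathcal{D}^{(Cy)}_{\beta,N}\tfrac{1}{N}W_1(x)$ as a Laurent series at $x=\infty$ using $W_1(x)=\sum_{k\geq 0}m_{2k}^{(Cy)}x^{-2k-1}$ and read off the non‑negative powers of $x$; these involve only $m_0^{(Cy)}$ and $m_2^{(Cy)}$ when $\beta=2$, and additionally $m_4^{(Cy)}$ when $\beta=1,4$. Here $m_0^{(Cy)}=N$ by normalisation, while $m_2^{(Cy)}$ and $m_4^{(Cy)}$ are obtained from the relation $m_{2k}^{(Cy)}=(-1)^{k}\,m_{2k}^{(J)}\big|_{a=b=-\beta(N-1)/2-1-\alpha}$ — equivalent to \eqref{4.8} together with $m_0^{(J)}=m_0^{(Cy)}=N$ — with the requisite Jacobi moments supplied by \eqref{S.1}, \eqref{S.2} and the Jack‑polynomial moment formulae cited there (equivalently, from the initial data \eqref{4.15}, \eqref{4.16} of Corollary \ref{C4.3} parsed through \eqref{4.8}). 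Substituting these values into the expansion and simplifying yields the constant $4(N+\alpha)(N+2\alpha)$ in the $\beta=2$ case and the quadratic $h(x;N,\alpha)$ in the $\beta=4$ case. One may equally run this through the moment recurrences of Corollary \ref{C4.3}: rewriting \eqref{4.21} (resp. \eqref{4.20}) as a recurrence for $\{m_{2k}^{(Cy)}\}$ and multiplying by $x^{-2k-1}$, the bulk of the series telescopes to zero and the residual low‑order terms reassemble into the same polynomial.

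Finally, the $\beta=1$ case is immediate from the $\beta=4$ case. By Remark \ref{R2.5}, Proposition \ref{P2.4} — hence the passage from \eqref{3.4} to \eqref{3.7} — is valid for all $\beta>0$, so \eqref{3.6}, \eqref{3.6a} show that $\mathcal{D}^{(Cy)}_{1,N}$ is exactly $\mathcal{D}^{(Cy)}_{4,N}$ after the substitution $N\mapsto -N/2$, $\alpha\mapsto -2\alpha$ encoded in $\tilde{\alpha},\tilde{N}$; performing the same substitution in the inhomogeneous term produces $h(x;-N/2,-2\alpha)$. This is the resolvent‑level shadow of the $\beta\leftrightarrow 4/\beta$ duality \eqref{4.31x}, which can be used as a consistency check.

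The main obstacle I expect is the bookkeeping in the $\beta=1,4$ case: the order‑five operator \eqref{3.6} has coefficients of degree up to ten in $x$, so feeding $m_2^{(Cy)}$ and $m_4^{(Cy)}$ (as rational functions of $N,\alpha$) through it and collapsing the resulting symbolic expression down to the compact quadratic $h(x;N,\alpha)$ is lengthy and is most safely done with computer algebra. A secondary point needing care is the justification of the Stieltjes‑transform/contour manipulations in the first paragraph, which are literally valid only where $\rho_{(1)}^{(Cy)}$ decays fast enough (${\rm Re}\,\alpha$ large), the general statement being recovered by analytic continuation in $\alpha$ as throughout Section \ref{S2}.
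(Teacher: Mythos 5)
Your proposal is correct and follows essentially the same route as the paper, which likewise obtains the inhomogeneous term by substituting the formal moment series \eqref{4.41} into the differential equation \eqref{3.7} (justified by adapting \cite[Appendix A]{RF19}), so that only the non-negative powers of $x$ --- fixed by the lowest moments $m_0^{(Cy)}$, $m_2^{(Cy)}$ (and $m_4^{(Cy)}$ for $\beta=1,4$) --- survive. Your Stieltjes-transform/jump argument for polynomiality is a clean equivalent of the integration-by-parts reasoning the paper cites; the only minor imprecision is that for fixed $\alpha$ only finitely many moments converge, so \eqref{4.41} is the asymptotic expansion at infinity of the genuine Stieltjes transform rather than a convergent series, which does not affect the argument.
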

Comparing to Corollary \ref{C4.3}, one should interpret $\mathcal{D}_{\beta,N}^{(Cy)}$ as encoding the moment recurrences therein, while the right-hand side of \eqref{4.42} encodes the initial conditions.

 \subsubsection{Expansion in  $1/N$}
Considering the moment recurrences and initial conditions of Corollary \ref{C4.3} directly, it is evident that $m_k^{(Cy)}/N$ admits a $1/N$ expansion upon setting $\alpha=\hat{\alpha}\beta N/2$. Thus, we may take as ansatz
\begin{equation}\label{4.44}
\frac{1}{N}\hat{W}_1(x)=\sum_{l=0}^{\infty}\frac{\hat{W}_{1,l}(x)}{N^l}.
\end{equation}
Setting $\alpha=\hat{\alpha}\beta N/2$ in \eqref{4.42} results in differential equations for $\hat{W}_1(x)/N$. Substituting the above ansatz into these differential equations and equating powers of $N$ then gives first order differential equations for the expansion coefficients $\hat{W}_{1,l}(x)$.

\begin{proposition} \label{P4.10}
Set $\beta=2$. The leading term $\hat{W}_{1,0}(x)$ in (\ref{4.44}) satisfies the differential equation
\begin{equation} \label{4.45}
(1+x^2)(\hat{\alpha}^2x^2-2\hat{\alpha}-1)\hat{W}_{1,0}'(x)+(\hat{\alpha}^2x^2-\hat{\alpha}^2-4\hat{\alpha}-2)x\hat{W}_{1,0}(x)=-(\hat{\alpha}+1)(2\hat{\alpha}+1).
\end{equation}
Next, $\hat{W}_{1,1}(x)$ satisfies the homogeneous differential equation corresponding to \eqref{4.45}. 
Finally, for $l\geq2$, we have
\begin{multline} \label{4.47}
4(1+x^2)(\hat{\alpha}^2x^2-2\hat{\alpha}-1)\hat{W}_{1,l}'(x)+4(\hat{\alpha}^2x^2-\hat{\alpha}^2-4\hat{\alpha}-2)x\hat{W}_{1,l}(x) \\
=(1+x^2)^3\hat{W}_{1,l-2}'''(x) +8(1+x^2)^2x\hat{W}_{1,l-2}''(x) \\
+2(1+x^2)(7x^2+3)\hat{W}_{1,l-2}'(x)+4(1+x^2)x\hat{W}_{1,l-2}(x).
\end{multline}
\end{proposition}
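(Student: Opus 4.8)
The plan is to start from the inhomogeneous equation for the resolvent in the case $\beta=2$ recorded in Proposition \ref{P4.9}, namely $\mathcal{D}^{(Cy)}_{2,N}\,\frac{1}{N}W_1(x) = 4(N+\alpha)(N+2\alpha)$, and to track powers of $N$ after the substitution $\alpha = \hat{\alpha}\beta N/2 = \hat{\alpha}N$. The first step is the observation that, once $\alpha=\hat{\alpha}N$, every coefficient of the operator $\mathcal{D}^{(Cy)}_{2,N}$ in \eqref{3.5} is a polynomial in $N$ of degree two \emph{with no linear term}: indeed $\alpha$ enters only through $\alpha^2$ and through the combination $2N(N+2\alpha)=2N^2(1+2\hat{\alpha})$. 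Separating the $N^2$ and $N^0$ parts then yields the splitting
\begin{equation*}
\mathcal{D}^{(Cy)}_{2,N}\Big|_{\alpha=\hat{\alpha}N} = N^2\,\mathcal{L}_0 + \mathcal{L}_1 ,
\end{equation*}
where
\begin{equation*}
\mathcal{L}_0 = -4(1+x^2)(\hat{\alpha}^2x^2-2\hat{\alpha}-1)\frac{d}{dx} - 4x(\hat{\alpha}^2x^2-\hat{\alpha}^2-4\hat{\alpha}-2)
\end{equation*}
is first order, and $\mathcal{L}_1 = (1+x^2)^3\tfrac{d^3}{dx^3}+8x(1+x^2)^2\tfrac{d^2}{dx^2}+2(1+x^2)(3+7x^2)\tfrac{d}{dx}+4x(1+x^2)$ is the third-order operator obtained from \eqref{3.5} by setting $\alpha=N=0$, and so carries no $N$-dependence. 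In parallel, the right-hand side becomes $4(N+\alpha)(N+2\alpha)\big|_{\alpha=\hat{\alpha}N}=4N^2(1+\hat{\alpha})(1+2\hat{\alpha})$.

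Next I would substitute the ansatz \eqref{4.44} --- whose validity is read off from the structure of the recurrences and initial conditions in Corollary \ref{C4.3} --- into $\big(N^2\mathcal{L}_0+\mathcal{L}_1\big)\sum_{l\ge0}\hat{W}_{1,l}(x)/N^{l} = 4N^2(1+\hat{\alpha})(1+2\hat{\alpha})$ and equate coefficients of $N^{2-l}$. The order $N^2$ terms give $\mathcal{L}_0\hat{W}_{1,0}=4(1+\hat{\alpha})(1+2\hat{\alpha})$, which on dividing through by $-4$ is exactly \eqref{4.45}. The order $N^1$ terms --- which arise only from $N^2\mathcal{L}_0$ acting on the $\hat{W}_{1,1}/N$ term, there being no $N^1$ contribution from $\mathcal{L}_1$ nor from the right-hand side --- give $\mathcal{L}_0\hat{W}_{1,1}=0$, i.e.\ the homogeneous equation associated with \eqref{4.45}. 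For $l\ge2$, the order $N^{2-l}$ terms read $\mathcal{L}_0\hat{W}_{1,l}+\mathcal{L}_1\hat{W}_{1,l-2}=0$; rearranging and clearing the factor $-4$ from $\mathcal{L}_0$ reproduces \eqref{4.47}, the right-hand side there being precisely $\mathcal{L}_1\hat{W}_{1,l-2}$. Since $\mathcal{L}_0$ is first order, each of these is a first-order linear ODE for $\hat{W}_{1,l}$, as claimed.

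The computation is essentially bookkeeping, so I do not expect a genuine obstacle; the one point that demands care is confirming that the substitution $\alpha=\hat{\alpha}N$ leaves no odd powers of $N$ in the coefficients of $\mathcal{D}^{(Cy)}_{2,N}$ --- this is exactly what makes the hierarchy close with a shift by two in the index $l$ --- together with the term-by-term check that the $N^0$ remainder $\mathcal{L}_1$ has the stated $x$-polynomial coefficients $(1+x^2)^3$, $8x(1+x^2)^2$, $2(1+x^2)(3+7x^2)$ and $4x(1+x^2)$. One should also note that the argument is not circular: the existence of the expansion \eqref{4.44} is established independently from Corollary \ref{C4.3}, and the differential equations derived here merely characterise its coefficients (up to the homogeneous ambiguity in each $\hat{W}_{1,l}$, which is pinned down by the behaviour of $\hat{W}_{1,l}(x)$ as $x\to\infty$ inherited from the resolvent).
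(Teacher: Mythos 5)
Your proposal is correct and follows essentially the same route as the paper: the paper's (implicit) proof is precisely to set $\alpha=\hat{\alpha}N$ in the $\beta=2$ case of \eqref{4.42}, substitute the ansatz \eqref{4.44}, and equate powers of $N$, which is what you carry out, and your explicit splitting $\mathcal{D}^{(Cy)}_{2,N}|_{\alpha=\hat{\alpha}N}=N^2\mathcal{L}_0+\mathcal{L}_1$ together with the resulting equations at orders $N^2$, $N^1$ and $N^{2-l}$ checks out against \eqref{3.5} and reproduces \eqref{4.45}--\eqref{4.47} exactly. No gaps.
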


\begin{remark} \label{R4.11}
1.~The equation (\ref{4.45}) has general solution
\begin{equation} \label{4.47x}
\hat{W}_{1,0}(x)=\frac{(\hat{\alpha}+1)x}{1+x^2}+C\frac{\sqrt{\hat{\alpha}^2x^2-2\hat{\alpha}-1}}{1+x^2}.
\end{equation}
The integration constant $C$ is to be set equal to $-1$. Then $\hat{W}_{1,0}(x)\sim1/x$ as $x\to\infty$, in line with $m_0^{(Cy)}=N$. This shows $x\hat{W}_{1,0}(x)=G(x)$, as expected. \\
2.~The fact that $\hat{W}_{1,1}(x)$ satisfies the homogeneous differential equation corresponding to \eqref{4.45}
 is consistent with the requirement that $\hat{W}_{1,l}(x;\beta=2)=0$ for odd $l$, since the moments $m_k^{(Cy)}$ are odd functions of $N$ when $\beta=2$. \\
 3.~To uniquely determine $\hat{W}_{1,2p}(x)$ for $p\geq1$ an integer, we use (\ref{4.21a}) to compare the $1/N^2$ expansion of $\frac{1}{N}m_2^{(Cy)}|_{\alpha=\hat{\alpha}N}$ to \eqref{4.44} and consequently see that
\begin{equation}
\hat{W}_{1,2p}(x)\underset{x\to\infty}{\sim}\frac{1+2\hat{\alpha}}{(2\hat{\alpha})^{2p+1}x^3}.
\end{equation}
Thus, the integration constant present in the solution of \eqref{4.47} is zero. In particular, solving (\ref{4.47}) with $l=2$ gives
\begin{equation} \label{4.49}
\hat{W}_{1,2}(x)=\frac{\hat{\alpha}^2(1+2\hat{\alpha})(1+x^2)}{8(\hat{\alpha}^2x^2-2\hat{\alpha}-1)^{5/2}}.
\end{equation}
Hence by the Sokhotski-Plemelj formula,
\begin{equation} \label{4.50}
\frac{1}{N}\rho_{(1)}^{(Cy)}(x)\Big |_{\alpha = \hat{\alpha} N}=\hat{\rho}_{(1)}^{(Cy)}(x)+\frac{1}{N^2}\frac{\hat{\alpha}^2(1+2\hat{\alpha})(1+x^2)}{8\pi(1+2\hat{\alpha}-\hat{\alpha}^2x^2)^{5/2}}\chi_{|x|<\sqrt{1+2\hat{\alpha}}/\hat{\alpha}}+{\rm O}\left(\frac{1}{N^4}\right).
\end{equation}
\end{remark}

In the cases $\beta=1$ and $4$, the analogue of Proposition \ref{P4.10} is extracted from equation \eqref{4.42} using the procedure described above. We do not present it here for brevity, but simply note that in these cases, the first order differential equation satisfied by $\hat{W}_{1,l}(x)$ has inhomogeneous terms dependent on $\hat{W}_{1,p}(x)$ for $1\leq p\leq 4$. This is in contrast to the $\beta=2$ case, where the differential equation for $\hat{W}_{1,l}(x)$ does not have $\hat{W}_{1,l-1}(x)$ present in the inhomogeneous term. Skipping these details, we now supplement equations \eqref{4.49} and \eqref{4.50} with their $\beta=1,4$ analogues.

\begin{proposition} \label{P4.12}
Set $\beta=1$ or $4$. Then, $\hat{W}_{1,0}(x)$ is given by \eqref{4.47x} with $C=-1$ (recall the $\beta$-independence of $\hat{\rho}_{(1)}^{(Cy)}(x)$). In addition,
\begin{align}
\hat{W}_{1,1}(x)&=\frac{(1-2/\beta)\hat{\alpha}}{2}\left[\frac{1}{\sqrt{\hat{\alpha}^2x^2-2\hat{\alpha}-1}}-\frac{\hat{\alpha}x}{\hat{\alpha}^2x^2-2\hat{\alpha}-1}\right], \label{4.51}
\\ \hat{W}_{1,2}(x)&=\frac{(1-2/\beta)^2}{2}\left[\frac{\hat{\alpha}^2(1+2\hat{\alpha}+\hat{\alpha}^2)x^2}{(\hat{\alpha}^2x^2-2\hat{\alpha}-1)^{5/2}}+\frac{\hat{\alpha}^2(1+2\hat{\alpha})(1+x^2)}{4(\hat{\alpha}^2x^2-2\hat{\alpha}-1)^{5/2}}-\frac{\hat{\alpha}(1+2\hat{\alpha}+\hat{\alpha}^2)x}{(\hat{\alpha}^2x^2-2\hat{\alpha}-1)^{2}}\right] \nonumber
\\&\qquad+\frac{\hat{\alpha}^2(1+2\hat{\alpha})(1+x^2)}{4\beta(\hat{\alpha}^2x^2-2\hat{\alpha}-1)^{5/2}}. \label{4.52}
\end{align}
Applying the Sokhotski-Plemelj formula shows
\begin{multline} \label{4.53}
\frac{1}{N}\rho_{(1)}^{(Cy)}(x)\Big|_{\alpha=\hat{\alpha}\beta N/2}=\hat{\rho}_{(1)}^{(Cy)}(x)+\frac{1}{N}(1-2/\beta)\left[\frac{\hat{\alpha}}{2\pi\sqrt{1+2\hat{\alpha}-\hat{\alpha}^2x^2}}\chi_{|x|<\sqrt{1+2\hat{\alpha}}/\hat{\alpha}}\right. \\
\left.-\frac{1}{4}\delta(x-\sqrt{1+2\hat{\alpha}}/\hat{\alpha})-\frac{1}{4}\delta(x+\sqrt{1+2\hat{\alpha}}/\hat{\alpha})\right]+{\rm O}\left(\frac{1}{N^2}\right),
\end{multline}
where $\delta(x)$ is the Dirac delta.
\end{proposition}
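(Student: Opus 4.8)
The plan is to mirror the reasoning that produced Proposition \ref{P4.10} in the $\beta=2$ case, now starting from the inhomogeneous differential equation of Proposition \ref{P4.9}: $\mathcal{D}_{4,N}^{(Cy)}\tfrac1N W_1(x)=h(x;N,\alpha)$ for $\beta=4$, and its dual $\mathcal{D}_{1,N}^{(Cy)}\tfrac1N W_1(x)=h(x;-N/2,-2\alpha)$ for $\beta=1$. First I would substitute $\alpha=\hat{\alpha}\beta N/2$ into the operator $\mathcal{D}_{\beta,N}^{(Cy)}$ via \eqref{3.6}--\eqref{3.6a}; since $\beta/2-1=\pm1$, one has $\alpha/(\beta/2-1)=\pm\hat{\alpha}\beta N/2$, so $\tilde{\alpha}$ and $\tilde{N}$ become polynomials in $N$ of degree two, hence every coefficient of $\mathcal{D}_{\beta,N}^{(Cy)}$ is polynomial in $N$, and likewise the right-hand side $h$ becomes polynomial in $N$ of degree four after the substitution. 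Dividing through by the common leading power of $N$, inserting the ansatz \eqref{4.44}, and equating successive powers of $N$ then yields the claimed chain of first-order ODEs. The structural fact I would lean on — itself a consequence of the $\beta$-independence of $\hat{\rho}_{(1)}^{(Cy)}(x)$ established through \eqref{9.3} and \eqref{3.25a} — is that the top-order-in-$N$ part of $\mathcal{D}_{\beta,N}^{(Cy)}$, applied termwise to \eqref{4.44}, acts on each $\hat{W}_{1,l}(x)$ through the very same first-order operator $(1+x^2)(\hat{\alpha}^2x^2-2\hat{\alpha}-1)\tfrac{d}{dx}+(\hat{\alpha}^2x^2-\hat{\alpha}^2-4\hat{\alpha}-2)x$ appearing in \eqref{4.45}. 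Thus $\hat{W}_{1,0}(x)$ satisfies precisely \eqref{4.45}, giving \eqref{4.47x}, and the integration constant is pinned to $C=-1$ by the requirement $\hat{W}_{1,0}(x)\sim1/x$ at infinity (equivalently $m_0^{(Cy)}=N$), exactly as in Remark \ref{R4.11}.

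At the next order in $N$, the equation for $\hat{W}_{1,1}(x)$ is this same first-order operator applied to $\hat{W}_{1,1}$, set equal to an inhomogeneity built from $\hat{W}_{1,0}(x)$ and $\hat{W}_{1,0}'(x)$ multiplied by the subleading-in-$N$ (hence $\beta$-dependent) pieces of $\mathcal{D}_{\beta,N}^{(Cy)}$ and of $h$; I expect these $\beta$-dependent pieces to collect into the combination $1-2/\beta$ visible in \eqref{4.51}. Solving this linear first-order ODE (with the integrating factor already implicit in \eqref{4.47x}) and fixing the integration constant by matching the $x\to\infty$ expansion against the known moments — concretely, using the $\beta=1,4$ initial conditions of Corollary \ref{C4.3}, i.e.\ \eqref{4.15} and \eqref{4.16} parsed through \eqref{4.8}, to read off the $1/N^{l}$ coefficients of $\tfrac1N m_0^{(Cy)}$, $\tfrac1N m_2^{(Cy)}$, $\tfrac1N m_4^{(Cy)}$ at $\alpha=\hat{\alpha}\beta N/2$ — produces \eqref{4.51}. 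Repeating once more at the following order gives the first-order ODE for $\hat{W}_{1,2}(x)$, whose inhomogeneity now also contains the genuine higher-derivative action of $\mathcal{D}_{\beta,N}^{(Cy)}$ on $\hat{W}_{1,0}$ (and an extra $1/\beta$ contribution on top of the $(1-2/\beta)^2$ terms); solving and again fixing the constant by the $x^{-3}$ behaviour of $\hat{W}_{1,2}$ yields \eqref{4.52}. Throughout I would sanity-check that setting $\beta=2$ recovers Proposition \ref{P4.10}, in particular \eqref{4.49}.

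Finally I would pass from resolvent coefficients to the density via the Sokhotski--Plemelj formula \eqref{H3b}. Writing $\hat{\alpha}^2x^2-2\hat{\alpha}-1=\hat{\alpha}^2(x-x_c)(x+x_c)$ with $x_c=\sqrt{1+2\hat{\alpha}}/\hat{\alpha}$, the branch of $\hat{W}_{1,0}$ reproduces $\hat{\rho}_{(1)}^{(Cy)}(x)$ as in \eqref{9.3z}; in \eqref{4.51} the term $\hat{\alpha}/\sqrt{\hat{\alpha}^2x^2-2\hat{\alpha}-1}$ contributes the $\chi_{|x|<x_c}$-supported $1/\sqrt{1+2\hat{\alpha}-\hat{\alpha}^2x^2}$ piece of \eqref{4.53}, while the rational term $\hat{\alpha}x/(\hat{\alpha}^2x^2-2\hat{\alpha}-1)$ has simple poles at $x=\pm x_c$ with residue $1/(2\hat{\alpha})$, and the boundary value of $1/(x-x_c)$ across the real axis generates the Dirac deltas with coefficient $-\tfrac14(1-2/\beta)$; the $1/N^2$ term $\hat{W}_{1,2}$ is of order $1/N^2$ relative to $\hat{\rho}_{(1)}^{(Cy)}$, consistent with the stated error ${\rm O}(1/N^2)$ in \eqref{4.53}. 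The main obstacle I anticipate is purely computational: correctly isolating the top few powers of $N$ in the fifth-order operator $\mathcal{D}_{\beta,N}^{(Cy)}$ after $\alpha=\hat{\alpha}\beta N/2$, since there is substantial cancellation among the $\tilde{\alpha}^2$, $\tilde{\alpha}\tilde{N}$ and $\tilde{N}^2$ contributions to the first-order and zeroth-order parts of the operator, and bookkeeping the $1-2/\beta$ (and residual $1/\beta$) prefactors so that consistency with the $\beta=2$ results is manifest; once the relevant first-order ODEs are assembled, their solution and the Sokhotski--Plemelj step are routine.
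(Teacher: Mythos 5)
Your proposal follows essentially the same route as the paper: the paper itself only sketches the argument for $\beta=1,4$, saying the result is "extracted from equation \eqref{4.42} using the procedure described above," i.e.\ substitute $\alpha=\hat{\alpha}\beta N/2$ into the inhomogeneous ODE of Proposition \ref{P4.9}, insert the ansatz \eqref{4.44}, equate powers of $N$ to obtain a chain of first-order ODEs whose integration constants are fixed by the large-$x$ (moment) asymptotics, and then apply Sokhotski--Plemelj. Your account of this procedure, including the structural role of the $\beta$-independent leading operator and the origin of the Dirac-delta terms from the simple poles of $\hat{W}_{1,1}$ at $\pm\sqrt{1+2\hat{\alpha}}/\hat{\alpha}$, is correct.
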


Although derived for $\beta=1, 4$, Remark \ref{R4.11} tells us that both \eqref{4.51} and \eqref{4.52} are also valid for $\beta = 2$. In fact, adopting
the viewpoint of a loop equation analysis (see \cite[\S 3.1]{WF14}, \cite{FRW17}), it is expected that $\beta^p \hat{W}_{1,2p}$ is
an even polynomial in the variable $ (\sqrt{\beta} - 2/\sqrt{\beta})$. If this was to be assumed, it would follow that the
results of Proposition \ref{P4.12} are in fact valid for general $\beta > 0$.

More parallels between the Cauchy ensemble resolvent and their Gaussian and Laguerre ensemble analogues can be drawn from the references \cite{WF14}, \cite{FRW17}. For one, the structure exhibited in \eqref{4.50} of the $1/N^2$ correction term having
a nonintegrable singularity at each endpoint of the support, which diverges
like an inverse $5/2$ power, is shared by the analogous expansion for the GUE and
the LUE (the latter with the Laguerre parameter proportional to $N$). Moreover, when $\beta\neq2$,
the correction term given in these references is of order $1/N$ with Dirac delta contributions at the endpoints of support.
This property is shared with the correction term in \eqref{4.53}. As a final remark, we note that the aforementioned references also
show how integration of the correction terms against monomials, so as to
compute the corresponding moments, can be regularised
using integration by parts.

     \subsection*{Acknowledgements}
This research is part of the program of study supported by the Australian Research Council Centre of Excellence ACEMS. The work of PJF was also partially supported by the Australian Research Council Grant DP210102887, and that of AAR by the Australian Government Research Training Program Scholarship.
    
\providecommand{\bysame}{\leavevmode\hbox to3em{\hrulefill}\thinspace}
\providecommand{\MR}{\relax\ifhmode\unskip\space\fi MR }
\providecommand{\MRhref}[2]{%
  \href{http://www.ams.org/mathscinet-getitem?mr=#1}{#2}
}
\providecommand{\href}[2]{#2}

 \end{document}